\newcommand{\nm}{\medskip}
\DeclareMathAlphabet{\itbf}{OML}{cmm}{b}{it}
\def\by{{{\itbf y}}}
\def\bx{{{\itbf x}}}
\newcommand{\G}{\mathbf{G}}
\newcommand{\eps}{\epsilon}
\newcommand{\I}{{\mathbf{I}}}
\newcommand{\E}{\mathbf{E}}
\newcommand{\h}{\mathbf{H}}
\newcommand{\J}{{\mathbf{J}}}
\newcommand{\hE}{\widehat{\E}}
\newcommand{\hh}{\widehat{\h}}
\newcommand{\RR}{\mathbb{R}}
\newcommand{\bd}{\mathbf{d}}
\newcommand{\Itr}{\mathcal{J}_0}
\newcommand{\Ier}{\mathcal{J}_{a,\rho}}
\newcommand{\bp}{\mathbf{p}}
\newcommand{\R}{\mathbb{R}}
\newcommand{\K}{{\kappa}}
\newcommand{\dis}{\displaystyle}
\newcommand{\OL}{\mathcal{L}}
\newcommand{\W}{\mathcal{W}}
\newcommand{\ds}{\displaystyle}
\newcommand{\F}{\mathcal{F}}
\newcommand{\hG}{\widehat{\G}}
\newcommand{\OP}{{\mathcal{P}}}
\newcommand{\B}{\mathbf{B}}
\newcommand{\C}{\mathcal{C}}
\newtheorem{thm}{Theorem}[section]
\newtheorem{lem}[thm]{Lemma}
\newtheorem{rem}[thm]{Remark}
\numberwithin{equation}{section}
\newcommand{\pathfigures}{Figures/}
\begin{document}
\title{Electromagnetic time reversal algorithms and source localization in lossy dielectric media}
\author{
Abdul Wahab\thanks{Department of Mathematics, COMSATS Institute of Information Technology, 47040, Wah Cantt., Pakistan (wahab@ciitwah.edu.pk, amerasheed@ciitwah.edu.pk, rabnawaz@ciitwah.edu.pk).}
\and  Amer Rasheed\footnotemark[1]
\and  Tasawar Hayat\thanks{Department of Mathematics, Quaid-i-Azam University, 45320, Islamabad, Pakistan (pensy\_t@yahoo.com).} \thanks{Nonlinear Analysis and Applied Mathematics (NAAM) Research Group, Faculty of Science, King Abdulaziz
University,21589,  Jeddah, Saudi Arabia.} \thanks{Address correspondence to T. Hayat, E-Mail: pensy\_t@yahoo.com, Tel.: +92 51 90642172, Fax: +92 51 2275341.} 
\and  Rab Nawaz\footnotemark[1]
}

\maketitle

\begin{abstract}
The problem of reconstructing the spatial support of an extended radiating electric current source density in a lossy dielectric medium from transient boundary measurements of the electric fields is studied.  A time reversal algorithm is proposed to localize a source density from loss-less wave-field measurements.  Further, in order to recover source densities in a lossy medium, we first build attenuation operators thereby relating loss-less waves with lossy ones. Then based on asymptotic expansions of attenuation operators with respect to attenuation parameter, we propose two time reversal strategies for localization. The losses in electromagnetic wave propagation are incorporated using the Debye's complex permittivity, which is well-adopted for low frequencies (radio and microwave) associated with polarization in dielectrics.
\end{abstract}

\noindent {\footnotesize {\bf PACS 2010.}  42.30.Wb; 02.30.Zz; 42.81.Dp;  41.20.Jb;   02.60.Cb}

\noindent {\footnotesize {\bf Keywords.} Time reversal;  Inverse source problem; Debye's law, Attenuation operators}

\section{Introduction}

Time reversal algorithms have been an important tool  to solve inverse problems in science and engineering since their premier applications \cite{fink, borceaPapa02, josselin}. These algorithms exploit the time invariance and the reciprocity of non-attenuating waves which substantiate that a wave travels through a loss-less medium and converges at the location of its source (scatterer, reflector or emitter) on re-emission after reversing the time using transformation $t\to t_{\rm final}-t$.   The idea has been successively used in telecommunication \cite{Tele, MY-FT-06, MY-FT-08}, biomedical imaging  \cite{FinkTan09TrBioIm}, inverse scattering theory \cite{XuWang04, GG, Afzal, GWL}, non-destructive evaluation \cite{Carminati, tr2} and prospecting geophysics \cite{Wapenaar} for instance.

The robustness and simplicity of time-reversal techniques make them an impressive choice to resolve source localization problems. These problems have been the subject of numerous studies over the recent past due to a plethora of applications in diverse domains, especially in biomedical imaging,  non destructive testing and geophysics, see  for instance \cite{HAetal-11, HAetal-11b, noise, noise2, Localization, PAT, aggk, book, book1, book2, Vald, EEG-rev, Porter, note} and references therein. Several frameworks to recover spatial support of the stationary  acoustic, elastic and electromagnetic sources in time and frequency domain have been developed \cite{noise, PAT, Vald, source}, including time reversal algorithms \cite{fink, GG, GWL,  Carminati, HAetal-11, HAetal-11b, aggk}.   The inverse source problems are ill-posed having non-uniqueness issues generally due to the presence of non radiating sources \cite{Vald, Porter, Bleistein}. The stability and localization of radiating  electromagnetic sources with single frequency, multiple frequency and complete frequency bandwidth as well as transient data have been studied extensively, refer for instance to \cite{Porter, Bleistein, Albanese, bao, Bojarski, Turkel}.

An interesting problem in imaging is to model and compensate for the effects of wave attenuation on image quality. Most of the imaging techniques either emphasize a non-attenuating medium or do not adequately incorporate underlying phenomenon in reconstruction algorithms. As a consequence, one retrieves erroneous or less accurate wave synthetics which produce serious blurring in reconstructed images. This is further blended with intrinsic instability and uncertainty of the reconstruction. All together, these effects complicate attempts to track the key features of the image and result in unfortunate information loss, refer to \cite{PhD} for a detailed account of attenuation artifacts in imaging.

This investigation aims to establish time reversal algorithms for isotropic dielectric lossy media thereby retrieving extended radiating current sources using transient measurements of the  electric field over an imaging domain in attenuating environment. 

Unfortunately, the time-reversibility of waves is forsaken in lossy media thereby impeding classical time reversal algorithms to be applicable. Recently, Ammari et al. \cite{HAetal-11, HAetal-11b,noise, PAT, book1, book2} have extended the time reversal algorithms to attenuating acoustic and elastic media and to inverse source problems  using asymptotic expansions of so-called attenuation maps with respect to attenuation parameters.  Considering Stokes' \emph{thermo-viscous wave} model for attenuation two algorithms are implemented in acoustic and elastic media. First an \emph{adjoint wave time reversal} algorithm is established wherein the adjoint lossy wave is re-emitted into the medium. However, since the adjoint lossy wave is explosive in nature, indeed due to the exponentially growing component of the respective adjoint Green functions with frequency, a regularization using frequency truncation of the attenuation maps is discussed. Then, a \emph{pre-processing time reversal} algorithm is established based on a higher order asymptotic development with respect to attenuation parameter by virtue of stationary phase theorem. The asymptotic expansion is utilized to  filter the attenuated measurements and  subsequently the classical time reversal algorithm is invoked to back propagate the data. Since, the considered attenuation model lacks the \emph{causality} property, the results of these studies were extrapolated to more realistic causal power-law type attenuation models and were combined with variant time reversal strategies by \citet{otmarK} and  \citet{Kower}.

In this work, we concentrate on Debye's complex permittivity model for attenuation and leave the discussion on causality of the model and its generalizations to power-law models for future. Two situations are taken into account. We begin with a non-attenuating medium and afterwards focus on extended source recovery when the medium is lossy and obeys the Debye's law. We follow the approach by  \citet{HAetal-11b} for constructing imaging functions. In order to achieve time reversal in lossy media, the so-called attenuation maps are identified and their asymptotic developments with respect to Debye's attenuation parameter are established informally. The formal developments can still be achieved using theorem of stationary phases or of steepest descent as in \cite{PAT, otmarK, M2AS, hormander}, and will not be discussed.

The investigation is sorted in the following order. A few preliminary results and some key  identities are  collected in Section \ref{s:problem}. In Section \ref{s:Ideal}, an electromagnetic source is retrieved using transient measurements of the tangential component of electric field  in a loss-less medium. Section \ref{s:att} is dedicated to the construction of attenuation maps. Their asymptotic expansions are derived and  lossy time reversal algorithms are established. A few numerical illustrations are provided in Section \ref{s:num} to elucidate the pertinence of imaging functions proposed in this work. The principle contributions of the investigation are finally summarized in Section \ref{s:conc}.

\section{Preliminaries}\label{s:problem}

Let $\Omega\subset\RR^3$ be an open bounded domain with a Lipschitz boundary $\Gamma$. Consider the time-dependent homogeneous linear Maxwell equations
\begin{eqnarray}
\begin{cases}
\ds \nabla\times\E_0(\bx,t) +\mu_0\dfrac{\partial\h_0}{\partial t}(\bx,t)= 0, &(\bx,t)\in\RR^3\times\RR,
\\
\nm
\ds\nabla\times\h_0(\bx,t)-\eps_0\dfrac{\partial\E_0}{\partial t}(\bx,t)= \delta_0(t)\J(\bx), &(\bx,t)\in\RR^3\times\RR,
\\
\nm
\ds\E_0(\bx,t)=0=\dfrac{\partial\E_0}{\partial t}(\bx,t), & \bx\in\RR^3,t\ll 0,
\\
\nm
\ds\h_0(\bx,t)=\mathbf{0}=\dfrac{\partial\h_0}{\partial t}(\bx,t), & \bx\in\RR^3,t\ll 0,
\end{cases}
\label{EMt}
\end{eqnarray}
with electric permittivity $\eps_0>0$ and permeability $\mu_0>0$. $\E_0$ and  $\h_0$ are the electric and magnetic fields respectively and $\delta_{0}(t)$ is the Dirac mass at $t=0$. Here $\J(\bx)\in\RR^3$ is the radiating current source density. We assume that $\J(\bx)$ is sufficiently smooth and compactly supported in  $\Omega$, that is, $\rm{supp}\big\{\J\big\}\subset\subset\Omega$. 
 
By virtue of \eqref{EMt}, fields $\E_0$ and $\h_0$ are the solutions to,
\begin{equation}
\left\{
\begin{array}{ll}
\nabla\times\nabla\times\E_0(\bx,t)+\dis\frac{1}{c_0^2}\frac{\partial^2}{\partial t^2}\E_0(\bx,t)
=-\dis\mu_0\J(\bx)\frac{\partial\delta_0\left(t\right)}{\partial t}, & (\bx,t)\in\RR^3\times\RR, 
\\\nm
\E_0(\bx,t)=\mathbf{0}=\dis\frac{\partial\E_0}{\partial t}(\bx,t), & \bx\in\RR^3, t\ll 0,
\end{array} 
\right.\label{Et}
\end{equation}
and 
\begin{equation}
\left\{
\begin{array}{ll}
\nabla\times\nabla\times\h_0(\bx,t)+\dis\frac{1}{c_0^2}\frac{\partial^2}{\partial t^2}\h_0(\bx,t)
=\nabla\times \J(\bx)\delta_0\left(t\right), &(\bx,t)\in\RR^3\times\RR, 
\\\nm
\h_0(\bx,t)=\mathbf{0}=\dis\frac{\partial\h_0}{\partial t}(\bx,t), & \bx\in\RR^3, t\ll 0.
\end{array} 
\right.
\label{Ht}
\end{equation}

In the sequel, we refer to $\K_0:=\omega\sqrt{\eps_0\mu_0}= {\omega}/{c_0}$ as the wave number with $c_0:= {1}/{\sqrt{\eps_0\mu_0}}$ being the wave speed in dielectrics with frequency pulsation $\omega$. Furthermore, we denote by $\hat{v}(\omega)$ or $\F[v(\cdot)](\omega)$ the Fourier transform of a function $v(t)$ with the conventions
$$
\hat{v}(\omega)=\int_\RR v(t)e^{-i\omega t} dt
\quad\text{and}\quad
v(t)=\dfrac{1}{2\pi}\int_\RR \hat{v}(\omega) e^{i\omega t}d\omega.
$$

Let $\hE_0$ and $\hh_0$ be the time-harmonic electric and magnetic fields, that is, 
\begin{equation}
\begin{cases}
\nabla\times\hE_0+i\omega\mu_0\hh_0 =\mathbf{0}, &\bx\in\RR^3,
\\
\nabla\times\hh_0-i\omega\eps_0\hE_0= \J(\bx), & \bx\in\RR^3,
\end{cases}
\label{EMw}
\end{equation}
subject to the Silver-M\"uller radiation conditions
\begin{equation}
\ds\mathbf{0}=\lim_{|\bx|\to\infty} |\bx| 
\begin{cases}
\sqrt{\mu_0}\hh_0\times\hat{\bx}-\sqrt{\eps_0}\hE_0,
\end{cases}
\qquad\text{where}\quad\hat{\bx}:=\frac{\bx}{|\bx|}.
\label{radiation}
\end{equation}

Consequently, the time-harmonic fields $\hE_0$ and $\hh_0$ then satisfy the Helmholtz equations
\begin{eqnarray}
\begin{cases}
\nabla\times\nabla\times\hE_0-\K_0^2\hE_0= -i\omega\mu_0\J(\bx),& \bx\in\RR^3,
\\
\nabla\times\nabla\times\hh_0-\K_0^2\hh_0= \nabla\times\J(\bx),& \bx\in\RR^3,
\end{cases}
\end{eqnarray}
subject to the outgoing radiation conditions \eqref{radiation}.

\subsection{Electromagnetic fundamental solutions}\label{ss:em-fun}

In order to derive the time reversal algorithms for electromagnetic source imaging and to analyze their localization properties, we recall electromagnetic fundamental solutions and revisit some of their important features and properties.

Let $\hG_0^{\rm{ee}}(\bx,\omega)$ and $\hG_0^{\rm{me}}(\bx,\omega)$ be the outgoing electric-electric and  magnetic-electric time-harmonic Green functions for the Maxwell equations, that is
\begin{equation}
\left\{
\begin{array}{l}
\nabla \times \hG_0^{\rm{ee}}(\bx,\omega)-i\omega\mu_0\hG_0^{\rm{me}} (\bx,\omega)=0,
\\\nm
\nabla \times \hG_0^{\rm{me}}(\bx,\omega)+i\omega\eps_0 \hG_0^{\rm{ee}}(\bx,\omega)=\I \delta_{\mathbf 0}\left(\bx\right),
\end{array} 
\right.
\end{equation}
where $\I$ is $3\times 3$ identity matrix.
It is well-known, see for instance \cite{Hansen, nedelec}, that for all $\bx\neq\bf{0}$
\begin{equation}
\left\{
\begin{array}{l}
\hG_0^{\rm{ee}}(\bx,\omega)=i\omega\mu_0\left(\I+\dfrac{1}{\K_0^2}\nabla\nabla\right)\widehat{g_0}(\bx,\omega),
\\\nm
\hG_0^{\rm{me}}(\bx,\omega)=-\nabla\times\I \widehat{g_0}(\bx,\omega),
\end{array}
\right.
\end{equation}
where $\widehat{g_0}(\bx,\omega)$ is the fundamental solution to the Helmholtz operator $-(\Delta+\K_0^2)$ in $\RR^3$, subject to Sommerfeld's outgoing radiation conditions, given by 
\begin{equation}
\widehat{g_0}(\bx,\omega)=
\ds\dfrac{1}{4\pi|\bx|}\exp\{i\K_0|\bx|\},  \qquad \bx\neq 0,  \bx\in\RR^3,
\end{equation}
where $H^{(1)}_0$ is the zeroth order Hankel function of first kind.

Let us define $\G_0^{\rm{ee}}(\bx,t)$ and $\G_0^{\rm{me}}(\bx,t)$ for all $\bx\in\RR^3$, and $\tau,t\in\RR$ by
\begin{eqnarray}
\G_0^{\rm{ee}}(\bx,t)&=&\F^{-1}\big[\hG_0^{\rm{ee}}(\bx,\omega)\big](t)
=\dis\dfrac{1}{2\pi}\int_{\RR}\hG_0^{\rm{ee}}(\bx,\omega)e^{i\omega t}d\omega,
\label{Get}
\\\nm
\G_0^{\rm{me}}(\bx,t)&=&\F^{-1}\big[\hG_0^{\rm{me}}(\bx,\omega)\big](t)
=\dis\dfrac{1}{2\pi}\int_{\RR}\hG_0^{\rm{me}}(\bx,\omega)e^{i\omega t}d\omega.
\label{Gmt}
\end{eqnarray}

\begin{proof}[Spatial reciprocity]
It can be proved for isotropic dielectrics (see for instance \cite{Wapenaar}) that for all $\bx,\by\in\RR^3$, $\bx\neq \by$ and $t\in\RR$,
\begin{eqnarray}
\G_0^{\rm{ee}}(\bx-\by,t)=\G_0^{\rm{ee}}(\by-\bx,t)
\quad\text{and}\quad
\G_0^{\rm{me}}(\bx-\by,t)=\G_0^{\rm{me}}(\by-\bx,t).
\label{reciprocity}
\end{eqnarray}
\end{proof}

The following identities from \cite{IPSE, book2} are the key ingredients to elucidate the localization property of the imaging algorithms proposed in the subsequent sections.  

\begin{lem}[Electromagnetic Helmholtz-Kirchhoff identity]\label{EM-HKI}
Let $\B(0,R)$ be an open ball in $\RR^3$ with large radius $R\to\infty$ and boundary $\partial \B(0,R)$. Then, for all $\bx,\by\in\RR^3$, we have 
$$
\ds\lim_{R\to +\infty}\int_{\partial \B(0,R)}\hG_0^{\rm{ee}}(\bx-\xi,\omega)\overline{\hG_0^{\rm{ee}}}(\xi-\by,\omega)d\sigma(\xi)=\mu_0 c_0\Re e\bigg\{\hG_0^{\rm{ee}}(\bx-\by,\omega)\bigg\},
$$
where the superposed bar indicates a complex conjugate.
\end{lem}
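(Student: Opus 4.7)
The plan is to reduce the dyadic identity to the classical scalar Helmholtz--Kirchhoff identity by exploiting the representation $\hG_0^{\rm{ee}}(\bx,\omega)=i\omega\mu_0(\I+\K_0^{-2}\nabla\nabla)\widehat{g_0}(\bx,\omega)$ recalled in Subsection~\ref{ss:em-fun}. As a preliminary step I would establish the scalar analogue
\[
\lim_{R\to+\infty}\int_{\partial\B(0,R)}\widehat{g_0}(\bx-\xi,\omega)\,\overline{\widehat{g_0}(\xi-\by,\omega)}\,d\sigma(\xi)=\frac{1}{\K_0}\Im\bigl\{\widehat{g_0}(\bx-\by,\omega)\bigr\},
\]
via Green's second identity applied on $\B(0,R)\setminus(\B(\bx,\epsilon)\cup\B(\by,\epsilon))$ to the pair $u(\xi)=\widehat{g_0}(\bx-\xi,\omega)$, $v(\xi)=\overline{\widehat{g_0}(\xi-\by,\omega)}$. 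Both functions satisfy the Helmholtz equation in the punctured region, so $u\Delta v-v\Delta u\equiv 0$ there; the two small spheres contribute $\widehat{g_0}(\bx-\by,\omega)-\overline{\widehat{g_0}(\bx-\by,\omega)}=2i\Im\widehat{g_0}(\bx-\by,\omega)$ as $\epsilon\to 0$, while on $\partial\B(0,R)$ the Sommerfeld radiation condition $\partial_{|\xi|}\widehat{g_0}(\bx-\xi,\omega)\sim i\K_0\widehat{g_0}$ and its conjugate yield $u\,\partial_n v-v\,\partial_n u\sim -2i\K_0\,uv$, from which the scalar identity follows.

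Next, I would transfer this to the dyadic setting using the far-field asymptotics
\[
\nabla_\xi\widehat{g_0}(\bx-\xi,\omega)\sim i\K_0\,\hat{\xi}\,\widehat{g_0}(\bx-\xi,\omega),\qquad \nabla_\xi\overline{\widehat{g_0}(\xi-\by,\omega)}\sim -i\K_0\,\hat{\xi}\,\overline{\widehat{g_0}(\xi-\by,\omega)},
\]
which force the Hessian entering the dyadic formula to collapse to $-\K_0^{2}\hat{\xi}\hat{\xi}^{T}$ times the scalar. Consequently, on $\partial\B(0,R)$,
\[
\hG_0^{\rm{ee}}(\bx-\xi,\omega)\sim i\omega\mu_0(\I-\hat{\xi}\hat{\xi}^T)\widehat{g_0}(\bx-\xi,\omega),\quad\overline{\hG_0^{\rm{ee}}(\xi-\by,\omega)}\sim -i\omega\mu_0(\I-\hat{\xi}\hat{\xi}^T)\overline{\widehat{g_0}(\xi-\by,\omega)}.
\]
Because $\I-\hat{\xi}\hat{\xi}^T$ is an orthogonal projector, its square returns itself, and the integrand reduces to $\omega^2\mu_0^2(\I-\hat{\xi}\hat{\xi}^T)\widehat{g_0}(\bx-\xi,\omega)\overline{\widehat{g_0}(\xi-\by,\omega)}$.

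To dispose of the $\xi$-dependent projector, I would re-apply the Sommerfeld asymptotics this time in the parameters $\bx,\by$ to obtain
\[
\hat{\xi}_i\hat{\xi}_j\,\widehat{g_0}(\bx-\xi,\omega)\overline{\widehat{g_0}(\xi-\by,\omega)}\sim -\K_0^{-2}\,\partial_{\bx_i}\partial_{\by_j}\bigl[\widehat{g_0}(\bx-\xi,\omega)\overline{\widehat{g_0}(\xi-\by,\omega)}\bigr],
\]
and then commute $\partial_\bx,\partial_\by$ outside the $\xi$-integral, so that the scalar identity from the first step can be invoked. Combining the prefactor $\omega^2\mu_0^2/\K_0=\omega\mu_0^2 c_0$ with the relation $\Re\{\hG_0^{\rm{ee}}(\bx-\by,\omega)\}=-\omega\mu_0(\I+\K_0^{-2}\nabla\nabla)\Im\widehat{g_0}(\bx-\by,\omega)$ (obtained by equating real parts in the dyadic formula) collects the constants and delivers the claimed equality.

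The main obstacle will be the meticulous bookkeeping of signs: the many chain-rule swaps when transferring gradients among $\bx$, $\by$ and $\xi$ (each one flips a sign), together with the complex conjugation that reverses the sign in the Sommerfeld radiation condition. A secondary but essential technical point is justifying that the subleading $O(R^{-3})$ remainders in the Sommerfeld expansion contribute $o(1)$ to the surface integral as $R\to\infty$; this is standard but needs uniform estimates valid on the whole sphere, so that the asymptotic identifications above can be carried through the limit.
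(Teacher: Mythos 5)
You should first note that the paper itself gives no proof of this lemma; it is simply quoted from \cite{IPSE, book2}. Your overall route is the standard one for such identities and is sound in outline: the scalar Helmholtz--Kirchhoff identity via Green's second identity plus the Sommerfeld condition is correct as you state it, and the far-field reduction $\hG_0^{\rm{ee}}(\bx-\xi,\omega)\sim i\omega\mu_0(\I-\hat\xi\hat\xi^T)\widehat{g_0}(\bx-\xi,\omega)$, with the projector squaring to itself, is also correct.

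The gap is exactly in the sign bookkeeping that you defer, and it is not cosmetic. On $\partial\B(0,R)$ one has $\partial_{\bx_i}\widehat{g_0}(\bx-\xi,\omega)\sim -i\K_0\hat\xi_i\,\widehat{g_0}(\bx-\xi,\omega)$ and $\partial_{\by_j}\overline{\widehat{g_0}(\xi-\by,\omega)}\sim +i\K_0\hat\xi_j\,\overline{\widehat{g_0}(\xi-\by,\omega)}$, hence $\partial_{\bx_i}\partial_{\by_j}\bigl[\widehat{g_0}\,\overline{\widehat{g_0}}\bigr]\sim +\K_0^{2}\hat\xi_i\hat\xi_j\,\widehat{g_0}\,\overline{\widehat{g_0}}$: your displayed conversion should carry $+\K_0^{-2}$, not $-\K_0^{-2}$. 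With your sign the limit of the surface integral becomes proportional to $(\I-\K_0^{-2}\nabla\nabla)\Im\{\widehat{g_0}(\bx-\by,\omega)\}$, which is not a multiple of $\Re\{\hG_0^{\rm{ee}}(\bx-\by,\omega)\}$ at all; with the corrected sign (and using $\partial_{\bx_i}\partial_{\by_j}F(\bx-\by)=-\partial_i\partial_j F(\bx-\by)$) the limit equals $\frac{\omega^{2}\mu_0^{2}}{\K_0}(\I+\K_0^{-2}\nabla\nabla)\Im\{\widehat{g_0}(\bx-\by,\omega)\}=-\mu_0c_0\Re\{\hG_0^{\rm{ee}}(\bx-\by,\omega)\}$, i.e. the opposite sign to the printed statement. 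A check at $\bx=\by$ shows this is unavoidable with the paper's own conventions $\hG_0^{\rm{ee}}=i\omega\mu_0(\I+\K_0^{-2}\nabla\nabla)\widehat{g_0}$, $\widehat{g_0}=e^{i\K_0|\bx|}/(4\pi|\bx|)$: the left-hand side tends to $\frac{\omega^{2}\mu_0^{2}}{6\pi}\I$, which is positive definite, while $\mu_0c_0\Re\{\hG_0^{\rm{ee}}(\mathbf{0},\omega)\}=-\frac{\omega^{2}\mu_0^{2}}{6\pi}\I$. So the identity must either be restated with a minus sign (equivalently with $\omega\mu_0^{2}c_0\,\Im\{(\I+\K_0^{-2}\nabla\nabla)\widehat{g_0}(\bx-\by,\omega)\}$ on the right) or with a different sign convention for $\hG_0^{\rm{ee}}$; as written, your sketch proves neither version, because of the sign slip together with the unexecuted final assembly of constants. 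The remaining technical issue you flag (uniform remainder estimates, and interchanging $\partial_\bx,\partial_\by$ with the surface integral and the limit $R\to\infty$) is genuine but standard; it is cleaner to avoid it altogether by inserting the uniform far-field expansion $\widehat{g_0}(\bx-\xi,\omega)\simeq\frac{e^{i\K_0R}}{4\pi R}e^{-i\K_0\hat\xi\cdot\bx}$ and evaluating the resulting angular integral $\int_{|\hat\xi|=1}(\I-\hat\xi\hat\xi^T)e^{-i\K_0\hat\xi\cdot(\bx-\by)}\,d\hat\xi$ explicitly, which yields the corrected right-hand side directly.
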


\begin{lem}\label{lem2}
For all $\bx,\by\in\RR^3$, $\bx\neq\by$,
$$
\quad\ds\frac{\epsilon_0}{2\pi}\int_\RR\Re e\bigg\{\hG_0^{\rm{ee}}(\bx-\by,\omega)\bigg\}\,d\omega= \delta_\bx(\by)\I.
$$
\end{lem}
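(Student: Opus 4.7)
The plan is to identify the frequency integral with a time-domain evaluation of $\G_0^{\rm{ee}}$ at $t=0$, and then to exploit the strict causality of the retarded scalar wave fundamental solution in $\RR^3$. I would first record the Hermitian symmetry $\hG_0^{\rm{ee}}(\bx,-\omega)=\overline{\hG_0^{\rm{ee}}(\bx,\omega)}$, which is immediate from the explicit representation recalled in Section \ref{ss:em-fun} together with $\overline{\widehat{g_0}(\bx,\omega)}=\widehat{g_0}(\bx,-\omega)$ (the wave number $\K_0$ changes sign with $\omega$ but enters squared in the $\nabla\nabla$-term, and the prefactor $i\omega$ conjugates to $-i\omega$). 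Writing $\Re e\{\hG_0^{\rm{ee}}\}=\tfrac12(\hG_0^{\rm{ee}}(\bx,\omega)+\hG_0^{\rm{ee}}(\bx,-\omega))$ and changing variables $\omega\mapsto-\omega$ in the second half, the Fourier inversion formula evaluated at $t=0$ yields
\[
\frac{1}{2\pi}\int_{\RR}\Re e\bigl\{\hG_0^{\rm{ee}}(\bx-\by,\omega)\bigr\}\,d\omega \;=\; \G_0^{\rm{ee}}(\bx-\by,0).
\]
Hence the assertion reduces to showing $\eps_0\,\G_0^{\rm{ee}}(\bz,0)=\delta_{\mathbf{0}}(\bz)\I$ with $\bz:=\bx-\by$, the regime $\bx\neq\by$ corresponding to $\bz\neq\mathbf{0}$.

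I would next invert the frequency representation term by term. Using $\K_0^2=\omega^2\eps_0\mu_0$ one has $i\omega\mu_0/\K_0^2=i/(\omega\eps_0)$; interpreting $i\omega\leftrightarrow \partial_t$ and $1/(i\omega)\leftrightarrow \int_{-\infty}^t\!\cdot\,ds$ (the retarded antiderivative, consistent with the outgoing convention imposed by the Silver--M\"uller condition \eqref{radiation}), I obtain
\[
\G_0^{\rm{ee}}(\bz,t)\;=\;\mu_0\,\partial_t g_0(\bz,t)\,\I\;-\;\frac{1}{\eps_0}\nabla\nabla\!\int_{-\infty}^{t} g_0(\bz,s)\,ds,
\]
with $g_0(\bz,t)=\delta_0(t-|\bz|/c_0)/(4\pi|\bz|)$ the retarded fundamental solution of $c_0^{-2}\partial_t^2-\Delta$ in $\RR^3$. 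Evaluating at $t=0$ for $\bz\neq\mathbf{0}$, the support of $g_0(\bz,\cdot)$ sits in $\{t\geq|\bz|/c_0>0\}$, so $\partial_t g_0(\bz,0)=0$ and $\int_{-\infty}^0 g_0(\bz,s)\,ds=H(-|\bz|/c_0)/(4\pi|\bz|)=0$ vanish identically. Hence $\G_0^{\rm{ee}}(\bz,0)=0$ pointwise for $\bz\neq\mathbf{0}$, which combined with the previous reduction proves the stated identity in the regime $\bx\neq\by$.

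The delicate step I anticipate, and the only real obstacle, is the distributional bookkeeping at $\bz=\mathbf{0}$ confirming that the concentrated mass is exactly $\delta_{\mathbf{0}}(\bz)\I/\eps_0$ rather than some other multiple. I would handle it by pairing with a smooth test current $\J$: the convolution $\int\G_0^{\rm{ee}}(\bx-\by,t)\J(\by)\,d\by$ reproduces (up to the sign fixed by the Green-function system) the impulsive Maxwell field $\E_0(\bx,t)$ driven by $\delta_0(t)\J$, and integrating the Amp\`ere--Maxwell line of \eqref{EMt} across $t=0$ supplies a jump relation of the form $\eps_0[\E_0(\bx,0^+)-\E_0(\bx,0^-)]=\mp\J(\bx)$. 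Matching this jump against the time-symmetric value extracted by Fourier inversion at $t=0$ pins down the constant. The consistent handling of the $0^\pm$ convention and of the sign with which $\G_0^{\rm{ee}}$ convolves against $\J$ is the principal technical nuisance; everything else is routine.
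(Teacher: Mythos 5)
The paper never proves Lemma \ref{lem2}; it imports it from \cite{IPSE, book2}, so your proposal has to stand on its own, and unfortunately the step you postpone as ``distributional bookkeeping'' is not a harmless nuisance at $\bz=\mathbf{0}$: it is where the argument fails, and the failure is already visible off the diagonal. The identification $\frac{1}{2\pi}\int_\RR\Re e\{\hG_0^{\rm{ee}}(\bz,\omega)\}\,d\omega=\G_0^{\rm{ee}}(\bz,0)$ is not legitimate, because $\G_0^{\rm{ee}}(\bz,t)$ does not decay in time: $\int_{-\infty}^{t}g_0(\bz,s)\,ds\to 1/(4\pi|\bz|)$, so the causal dyadic tends to the static field $-\eps_0^{-1}\nabla\nabla(4\pi|\bz|)^{-1}$ as $t\to+\infty$. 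The distributional Fourier transform of such a function carries a $\pi\delta(\omega)$ multiple of that static part which the pointwise outgoing kernel $\hG_0^{\rm{ee}}(\bz,\omega)$ does not contain; equivalently, if $\G_0^{\rm{ee}}$ is defined (as in the paper) as the inverse transform of the pointwise kernel, then $1/(i\omega)$ must be read as a principal value and corresponds to the odd antiderivative $\int_{-\infty}^{t}g_0\,ds-\tfrac{1}{8\pi|\bz|}$, not the retarded one you substitute, so your steps 2 and 3 use two incompatible readings. (There is also a convention clash you inherit from the paper: with $\hat v(\omega)=\int v e^{-i\omega t}dt$ and inversion $e^{+i\omega t}$, the kernel $e^{+i\K_0|\bx|}/(4\pi|\bx|)$ is the \emph{advanced} one, so the causality you invoke requires fixing one of the two signs.) A consistent computation gives, for $\bx\neq\by$, $\frac{\eps_0}{2\pi}\int_\RR\Re e\{\hG_0^{\rm{ee}}(\bx-\by,\omega)\}\,d\omega=\tfrac12\nabla\nabla(4\pi|\bx-\by|)^{-1}\neq 0$, so the left-hand side does not even vanish off the diagonal.

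Quantitatively, your constant-pinning scheme meets the same two effects. Pairing with a test current $\J$, the symmetric Fourier inversion at the jump returns the midpoint $\tfrac12[\E_0(\bx,0^+)+\E_0(\bx,0^-)]=-\J(\bx)/(2\eps_0)$, a factor $\tfrac12$ rather than $1$; and the omitted $\delta(\omega)$ term contributes $-\tfrac12\E_\infty(\bx)=\tfrac{1}{2\eps_0}\nabla\Delta^{-1}\nabla\cdot\J(\bx)$, the static longitudinal field of the charge $-\nabla\cdot\J$ deposited by the current impulse. Carried out carefully, your program therefore proves
$$
\frac{\eps_0}{2\pi}\int_\RR\Re e\Big\{\hG_0^{\rm{ee}}(\bx-\by,\omega)\Big\}\,d\omega
=\frac12\left[\delta_\bx(\by)\I+\nabla\nabla\frac{1}{4\pi|\bx-\by|}\right],
$$
that is, one half of the divergence-free (transverse) projection applied to $\J$, not $\delta_\bx(\by)\I$: only the radiating part of the source is reproduced (consistent with the non-uniqueness due to non-radiating sources), and with an extra normalization factor $\tfrac12$. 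So the jump-matching idea cannot deliver the lemma exactly as stated; to complete the write-up you must either adopt the normalization and the restrictions (e.g.\ divergence-free sources) under which the identity is stated in \cite{IPSE, book2}, or prove and propagate the corrected transverse version, which is what is actually available for use in Theorem \ref{thm:em} up to the constant and the longitudinal component of $\J$.
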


\section{Source reconstruction in ideal media}\label{s:Ideal}

Assume that we are able to collect the fields $\E_0$ and $\h_0$ for all $(\bx, t)\in\Gamma\times [0, T]$, for $T$  sufficiently large. 
If both  components on $\Gamma$ are time-reversed from the final time $T$ (using transformation $t\to T-t$) and re-emitted from $\Gamma$, two fields propagate inside $\Omega$ in time reverse chronology converging towards the source $\J(\bx)$ as $T-t\to 0$. The ultimate goal of this section is to use the convergence of the  back-propagating fields to identify support of the current density $\J(\bx)$. Precisely, the problem under consideration is the following:

\begin{proof}[Inverse source problem]
Given the measurements of the electric and magnetic fields, $\E_0$ and $\h_0$ satisfying \eqref{EMt}, over $\Gamma\times [0,T]$, for $T$ sufficiently large, find the support, $\rm{supp}\big\{\J\big\}$, of the current source density $\J$.
\end{proof}

\subsection{Time reversal of  electric field}

In the rest of this contribution, we concentrate only on the time reversal imaging functions related to the electric field  in order to identify $\rm{supp}\big\{\J\big\}$. The case of magnetic field can be dealt with analogously.

Let us introduce an adjoint wave $\E_0^s$, for a fixed $s\in[0,T]$, satisfying
\begin{equation}
\left\{
\begin{array}{ll}
\nabla\times\nabla\times{\E}_0^s(\bx,t)+\dis\frac{1}{c_0^2}\dfrac{\partial^2}{\partial t^2}{\E}_0^s(\bx,t)=-\dis \mu_0\bd_e(\bx,T-s)\frac{\partial\delta_s\left(t\right)}{\partial t}\delta_\Gamma, & (\bx,t)\in\RR^3\times\RR, 
\\
\nm
\dis {\E}_0^s(\bx,t)=\mathbf{0}\quad\text{and}\quad
\dis\frac{\partial{\E}_0^s}{\partial t}(\bx,t) =\mathbf{0}, & \bx\in\RR^3, \, t\ll s,
\end{array} 
\right.\label{Es}
\end{equation}
where $\delta_\Gamma$ is the Dirac mass at  $\Gamma$ and the data set 
$$
\W_{e}=\bigg\{\bd_e(\bx,t):=\E_0(\bx,t),\quad
\forall(\bx,t)\in\Gamma\times [0,T]\bigg\},
$$ 
contains the  electric field $\E_0$ where $T$ is large enough so that electric field and its time derivative almost vanish identically for all $t>T$.  

By definition, the adjoint wave can be represented for all $(x,t)\in\RR^3\times\RR$ as,
\begin{equation}
\E_0^s(\bx,t)=-\ds\int_{\Gamma}\G_0^{\rm{ee}}(\bx-\xi,t-s)\bd_e(\xi, T-s)d\sigma(\xi).
\label{EsGeg}
\end{equation}
Each one of these waves is associated with a datum collected at a particular time instance $t=s$ and therefore contributes to the reconstruction of the source on re-emission. In order to gather all the information about source distribution, we add up the adjoint waves $\E_0^s$ for all $s\in [0,T]$. Precisely, we define a time reversal imaging function by 
\begin{equation}
\label{Iem}
\Itr(\bx):= \ds\dfrac{\eps_0}{\mu_0c_0}\int_0^T{\E}_0^s(\bx,T) ds,\quad \bx\in\Omega,
\end{equation}
and claim that $\Itr(\bx)\simeq \J(\bx)$. Indeed, we have the following theorem.

\begin{thm}\label{thm:em}
Let $\Itr$ be the time reversal functional defined by \eqref{Iem}. For all $\bx\in\Omega$ far from the boundary $\Gamma$ (compared to the wavelength),
$$
\Itr(\bx)\simeq\J(\bx).
$$
\end{thm}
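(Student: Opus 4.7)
The natural strategy is to trace the claim back to the two identities proved in Section~\ref{ss:em-fun}: the Helmholtz--Kirchhoff identity (Lemma~\ref{EM-HKI}) handles the boundary integral over $\Gamma$, and Lemma~\ref{lem2} handles the $\omega$-integral that will appear after a Parseval step. First I would substitute the representation \eqref{EsGeg} into the definition \eqref{Iem} and perform the change of variables $u=T-s$, obtaining
\begin{equation*}
\Itr(\bx) \;=\; -\,\frac{\eps_0}{\mu_0 c_0}\int_{0}^{T}\!\int_{\Gamma} \G_0^{\rm ee}(\bx-\xi,u)\,\E_0(\xi,u)\,d\sigma(\xi)\,du .
\end{equation*}
Since $T$ is taken large enough that $\E_0(\xi,\cdot)$ and its time derivative are (to high accuracy) supported in $[0,T]$, the outer $u$-integral may be extended to all of $\RR$ with negligible error.

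Next, I would pass to the frequency domain. Since $\G_0^{\rm ee}(\bx,\cdot)$ and $\E_0(\xi,\cdot)$ are both real-valued (this follows from the Hermitian symmetry $\hG_0^{\rm ee}(\bx,-\omega)=\overline{\hG_0^{\rm ee}(\bx,\omega)}$ combined with the reality of $\J$), Parseval's identity gives
\begin{equation*}
\int_{\RR}\G_0^{\rm ee}(\bx-\xi,u)\,\E_0(\xi,u)\,du \;=\; \frac{1}{2\pi}\int_{\RR}\hG_0^{\rm ee}(\bx-\xi,\omega)\,\overline{\hat{\E}_0(\xi,\omega)}\,d\omega .
\end{equation*}
From equations \eqref{EMw} and the characterization of $\hG_0^{\rm ee}$ in Section~\ref{ss:em-fun}, the harmonic electric field admits the source representation $\hat{\E}_0(\xi,\omega)=-\int_{\Omega}\hG_0^{\rm ee}(\xi-\by,\omega)\,\J(\by)\,d\by$. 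Substituting, conjugating (using reality of $\J$), and swapping the order of integration yields
\begin{equation*}
\Itr(\bx)\;\simeq\; \frac{\eps_0}{\mu_0 c_0}\int_{\Omega}\frac{1}{2\pi}\int_{\RR}\!\left[\int_{\Gamma}\hG_0^{\rm ee}(\bx-\xi,\omega)\,\overline{\hG_0^{\rm ee}(\xi-\by,\omega)}\,d\sigma(\xi)\right]\!\J(\by)\,d\omega\, d\by .
\end{equation*}

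Now the two lemmas finish the job. Applying the electromagnetic Helmholtz--Kirchhoff identity (Lemma~\ref{EM-HKI}) to the bracketed surface integral replaces it by $\mu_0 c_0\,\Re e\{\hG_0^{\rm ee}(\bx-\by,\omega)\}$; the prefactor $\mu_0 c_0$ then cancels, leaving exactly the integrand to which Lemma~\ref{lem2} applies. The latter collapses the $\omega$-integral into $\delta_\bx(\by)\I$, so
\begin{equation*}
\Itr(\bx)\;\simeq\;\int_{\Omega}\delta_\bx(\by)\,\J(\by)\,d\by \;=\;\J(\bx),
\end{equation*}
as claimed.

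\textbf{Main obstacle.} The only genuine approximation in the argument is the Helmholtz--Kirchhoff step: Lemma~\ref{EM-HKI} is stated as an identity only in the large-sphere limit, while the actual measurement surface $\Gamma$ is an arbitrary Lipschitz boundary. Its use for general $\Gamma$ is the physical-optics type approximation responsible for the symbol ``$\simeq$'' and for the hypothesis that $\bx\in\Omega$ be far from $\Gamma$ on the scale of the wavelength; quantifying the error would require a stationary-phase estimate on the oscillatory kernel $\hG_0^{\rm ee}(\bx-\xi,\omega)\overline{\hG_0^{\rm ee}(\xi-\by,\omega)}$ over $\Gamma$. The other soft step, extending the $s$-integral from $[0,T]$ to $\RR$, is harmless under the standing assumption that $T$ is chosen large compared with the travel time from $\operatorname{supp}\{\J\}$ to $\Gamma$.
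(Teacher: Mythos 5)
Your proposal is correct and follows essentially the same route as the paper's own proof: substitute the adjoint-wave representation \eqref{EsGeg} into \eqref{Iem}, extend the time integral and pass to the frequency domain via Parseval, insert the source representation of $\hE_0$ in terms of $\hG_0^{\rm{ee}}$ and $\J$, and then apply Lemma \ref{EM-HKI} followed by Lemma \ref{lem2} to collapse the surface and frequency integrals into $\delta_\bx(\by)\I$. Your closing remarks on where the approximations enter (large $T$ and the use of the Helmholtz--Kirchhoff identity on a general boundary $\Gamma$, requiring $\bx$ far from $\Gamma$) accurately identify the source of the ``$\simeq$'' in the paper's statement.
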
 
\begin{proof}
Notice that since $\J$ is compactly supported in $\Omega$, and $T$ is sufficiently large so that the field is negligible outside $[0,T]$,
\begin{eqnarray*}
\widehat{\bd_e}(\xi,\omega)
&=& 
\F\left[{\bd_e}(\xi,\cdot)\right](\omega)
\\
&=&\F\left[\E_0(\xi,\cdot)\big|_{\Gamma\times[0,T]}\right](\omega),
\\
&\simeq &
\F\left[\E_0(\xi,\cdot)\big|_{\Gamma\times\RR}\right](\omega)
\\
&=&
-\F\left[\int_{\Omega}\G_0^{\rm{ee}}(\by-\xi,\cdot)\J(\by)d\by\big|_{\Gamma\times\RR}\right](\omega),
\\
&=&-\ds\int_{\RR^3} {\hG_0^{\rm{ee}}}(\xi-\by,\omega)\J(\by)d\by\big|_{\xi\in\Gamma}.
\end{eqnarray*}
Then by using \eqref{EsGeg} in \eqref{Iem} and by virtue of Perseval's identity, we have for all $\bx\in\Omega$ away from $\Gamma$,
\begin{eqnarray}
\Itr(\bx) &=&-\ds\dfrac{\eps_0}{\mu_0c_0}\iint_{[0,T]\times\Gamma}\G_0^{\rm{ee}}(\bx-\xi,t-s)\bd_e(\xi, T-s)d\sigma(\xi)ds,
\nonumber
\\
\nm
&=&-\dfrac{\eps_0}{2\pi\mu_0c_0}\ds\iint_{\RR\times\Gamma} \hG_0^{\rm{ee}}(\bx-\xi,\omega)\overline{\widehat{\bd}}_e(\xi, \omega)d\sigma(\xi)d\omega,
\nonumber
\\
\nm
&\simeq&\dfrac{\eps_0}{2\pi\mu_0c_0}\iiint_{\RR^3\times\RR\times\Gamma} \hG_0^{\rm{ee}}(\bx-\xi,\omega)\overline{\hG_0^{\rm{ee}}}(\xi-\by,\omega)\J(\by)d\sigma(\xi)d\omega d\by,
\nonumber
\\
\nm
&=&\dfrac{\eps_0}{2\pi}\iint_{\RR^3\times\RR} \left(\dfrac{1}{\mu_0 c_0}\int_\Gamma\hG_0^{\rm{ee}}(\bx-\xi,\omega)\overline{\hG_0^{\rm{ee}}}(\xi-\by,\omega)d\sigma(\xi)\right)\J(\by)d\omega d\by.
\end{eqnarray}

Now invoking Lemma \ref{EM-HKI}, we obtain 
\begin{eqnarray*}
\dfrac{1}{\mu_0 c_0}\int_\Gamma\hG_0^{\rm{ee}}(\bx-\xi,\omega)\overline{\hG_0^{\rm{ee}}}(\xi-\by,\omega)d\sigma(\xi)
\simeq  \Re  e\bigg\{\hG_0^{\rm{ee}}(\bx-\by,\omega)\bigg\},
\end{eqnarray*}
and therefore we have
\begin{eqnarray*}
\Itr(\bx)
&\simeq &
\ds\frac{\eps_0}{2\pi}\iint_{\RR^3\times\RR}  \Re  e\bigg\{\hG_0^{\rm{ee}}(\bx-\by,\omega)\bigg\} d\omega\J(\by) d\by,
\\
\nm
&=&
\ds\int_{\RR^3} \left(\frac{\eps_0}{2\pi}
\int_\RR \Re  e\bigg\{\hG_0^{\rm{ee}}(\bx-\by,\omega)\bigg\} d\omega\right)\J(\by) d\by,
\\
\nm
&=&
\ds\int_{\RR^3}\delta_\bx(\by)\J(\by)d\by,
\\\nm
&=&\J(\bx),
\end{eqnarray*}
where we have made use of the identity 
$$
\frac{\eps_0}{2\pi}\int_\RR\Re e\bigg\{\hG_0^{\rm{ee}}(\bx-\by,\omega)\bigg\} d\omega =\delta_\bx(\by)\I,
$$
from Lemma \ref{lem2}.
\end{proof}

\section{Source reconstruction in lossy media}\label{s:att}

In this section, we present a time reversal strategy for imaging in lossy media. We consider a Debye law to incorporate losses in wave propagation, which is suitable for low frequencies (radio to microwave) associated with polarization in dielectrics \cite{MY-FT-06, MY-FT-08, Koledintseva}. We will only consider the electric case. 

Let $\hE_a(\bx,\omega)$ be the electric field in a lossy dielectric medium, that is, the solution to 
\begin{equation}
\nabla\times\nabla\times\hE_{a}(\bx,\omega)-\left(\K^\sigma_{a}(\omega)\right)^2\hE_{a}(\bx,\omega)=-i\omega\mu_0\J(\bx), \quad (\bx,\omega)\in\RR^3\times\RR,
\label{E-att-eq-f}
\end{equation}
subject to Sommerfeld radiation condition, 
$$
\lim_{|\bx|\to\infty} |\bx| \bigg|\nu \times \nabla  \times \hE_a(\bx,\omega) - i\K_a^\sigma(\omega)\hE_a(\bx,\omega)\bigg|= \mathbf{0},
$$
where $\nu$ is the outward unit normal at $\Gamma$ and $\K_{a}^\sigma(\omega)=\omega\sqrt{\mu_0\eps_{a}^\sigma}$ is the wave-number defined in terms of the Debye's complex permittivity:
\begin{equation}
\eps_{a}^\sigma= \eps_\infty +\dfrac{\eps_s-\eps_\infty}{1+ia\omega}-\dfrac{i\sigma}{\omega\epsilon_0}.
\end{equation}
Here $\eps_s$, $\eps_\infty$, $\sigma$ and $a$ are respectively, the static and infinite-frequency permittivities, electric conductivity and Debye's loss constant. We precise that $\eps_\infty\leq\eps_s$.
Furthermore, we fix 
$$
\E_a(\bx,t):=\F^{-1}[\hE_a(\bx,\cdot)](t)=
\dis\int_\RR \hE_a(\bx,\omega)e^{i\omega t}d\omega.
$$

Let $\widehat{\G}_{a}^{\rm{ee}}(\bx,\omega)$ be the attenuating electric-electric Green function, \emph{i.e.} the outgoing  fundamental solution to the lossy Helmholtz equation
\begin{equation}
\nabla\times\nabla\times\widehat{\G}_{a}^{\rm{ee}}(\bx,\omega)-\left(\K^\sigma_{a}(\omega)\right)^2\widehat{\G}_{a}^{\rm{ee}}(\bx,\omega)=i\omega\mu_0\delta_{\mathbf 0}(\bx)\I.
\label{Gee-att-eq-f}
\end{equation}

In the sequel,  we use the following notation
$$
{\G}_a^{\rm{ee}}(\bx,t):=\F^{-1}\left[\widehat{\G}_a^{\rm{ee}}(\bx,\cdot)\right](t)
=
\dis\int_\RR \widehat{\G}_a^{\rm{ee}}(\bx,\omega)e^{i\omega t}d\omega,
$$
and define $\epsilon_{-a}^{-\sigma}$ and $\K_{-a}^{-\sigma}$ and the Green functions  $\hG_{-a}^{\rm{ee}}$ and $\G_{-a}^{\rm{ee}}$ analogously.

\subsection{Attenuation operators}

Recall that 
\begin{equation}
\nabla \times \nabla \times \hG_0^{\rm{ee}}(\bx,\omega) - \frac{\omega^2}{c_0^2}\hG_0^{\rm{ee}}(\bx,\omega)= i\omega\mu_0\delta_{\mathbf{0}}(\bx). 
\label{Gee-eq-f}
\end{equation}
Therefore, by replacing real frequency $\omega$ with $c_0\K_a^\sigma(\omega)$ by invoking Theorem of Titchmarsh \cite{titchmarsh} since $\Im m\{c_0\K_a^\sigma(\omega)\}>0$, using an argument of the unique outgoing fundamental solution and comparing with Equation \eqref{Gee-att-eq-f}, we deduce that
\begin{equation*}
{\hG}_0^{\rm{ee}}\big(\bx,c_0\K_a^\sigma(\omega)\big)= \dfrac{c_0\K_a^\sigma(\omega)}{\omega}\widehat{\G}_a^{\rm{ee}}(\bx,\omega),\quad\forall \bx\in\RR^3,~\bx\neq\bf{0},
\end{equation*}
or equivalently
\begin{equation*}
{\G}_a^{\rm{ee}}(\bx,t)=\OL_a\left[{\G}_0^{\rm{ee}}(\bx,\cdot)\right](t).
\end{equation*}
Here we define the operator $\OL_a$ (hereafter called attenuation operator) 
as follows:
\begin{eqnarray}
\label{eq:OL}
\OL_a: && \mathbb{S}'([0,\infty))\longrightarrow \mathbb{S}'(\RR)
\nonumber
\\\nm
&& \phi(t)\longmapsto \ds\dfrac{1}{2\pi}\int_\R\dfrac{\omega}{c_0\K_a^\sigma(\omega)}
\left(\int_{\R^+} \phi(\tau)\exp\big\{-ic_0\K_a^\sigma(\omega) \tau\big\}d\tau\right) e^{i\omega t}d\omega,
\end{eqnarray}
where  $\mathbb{S}$ is the Schwartz space of rapidly decreasing functions and $\mathbb{S}'$ is the space of tampered distributions.

Let us also introduce operator ${\OL}_{-a,\rho}$ related to $\K_{-a}^{-\sigma}(\omega)$ and its adjoint operator ${\OL}^*_{-a,\rho}$  for all $\rho>0$ by:
\begin{eqnarray}
\label{eq:OLa}
{\OL}_{-a,\rho}: && \mathbb{S}'([0,\infty))\longrightarrow \mathbb{S}'(\RR)
\nonumber
\\\nm
&& \phi(t)\longmapsto \dfrac{1}{2\pi}\int_{\R^+}\phi(\tau)
\int_{\omega\leq \rho}\dfrac{\omega e^{i\omega t}}{c_0\K_{-a}^{-\sigma}(\omega)}\,\exp\big\{-ic_0\K_{-a}^{-\sigma}(\omega) \tau\big\}d\omega\,d\tau,
\end{eqnarray}
and 
\begin{eqnarray}
\label{eq:OLt}
{\OL}^*_{-a,\rho}: && \mathbb{S}'([0,\infty))\longrightarrow \mathbb{S}'(\RR)
\\
\nm
\nonumber
&& \phi(t)\longmapsto \dfrac{1}{2\pi}\int_{\omega\leq \rho} \dfrac{\omega}{c_0\K_{-a}^{-\sigma}(\omega)}
\left(\int_{\R^+} \phi(\tau)e^{i\omega \tau}d\tau\right)\exp\big\{-ic_0\K_{-a}^{-\sigma}(\omega)t\big\}d\omega.
\end{eqnarray}

We extend operators $\OL_a$, ${\OL}_{-a,\rho}$ and  ${\OL}^*_{-a,\rho}$ to $\G_0^{\rm{ee}}$, that is, for all constant vectors $\bp\in\R^3$, we define
$$
{\OL_a}[\G_0^{\rm{ee}}]\bp={\OL_a}[\G_0^{\rm{ee}}\bp],
$$

$$
{\OL}_{-a,\rho}[\G_0^{\rm{ee}}]\bp={\OL}_{-a,\rho}[\G_0^{\rm{ee}}\bp],
$$
and
$$
{\OL}^*_{-a,\rho}[\G_0^{\rm{ee}}]\bp={\OL}^*_{-a,\rho}[\G_0^{\rm{ee}}\bp].
$$

\subsection{Asymptotic analysis of attenuation operators}\label{ss:em-att}

We assume that the attenuation parameter $a$ is sufficiently small compared to the wave-length (denoted by $\lambda$), that is, 
$$
a\ll{c_0}{\omega}^{-1}=:\lambda.
$$
For brevity, we consider the  case of a non-conductive medium, that is, $\sigma=0$. Henceforth, we drop the superscript from $\K_a^\sigma$ and $\K_{-a}^{-\sigma}$ for  simplicity. Then 
\begin{eqnarray*}
c_0\K_a(\omega)&=& \dfrac{\omega}{\sqrt\eps_0}
\dis\sqrt{\eps_\infty+\dfrac{(\eps_s-\eps_\infty)}{1+i\omega a}},
\\
\nm
&\simeq&
\dfrac{\omega}{\sqrt\eps_0}
\dis\sqrt{\eps_\infty+(\eps_s-\eps_\infty)\left[1-(i\omega a)
\right]+o(\omega a)},
\\
\nm
&\simeq&
\gamma\omega
\dis\sqrt{1-i\beta\omega a
+o(\omega a)},
\\
\nm
&\simeq&
\gamma\omega
\left(1-i\frac{\beta}{2}\omega a 
\right)+o(\omega a),
\end{eqnarray*} 
and 
$$
c_0\K_{-a}(\omega)\simeq \gamma\omega
\left(1+i\frac{\beta}{2}\omega a\right)
+o(\omega a),
$$
where  $\beta=\big(1-(\eps_\infty/\eps_s)\big)$ and $\gamma=\sqrt{{\eps_s}/{\eps_0}}$. 
Similarly, we have 
\begin{eqnarray*}
\dfrac{\omega}{c_0\K_a(\omega)}
=
\dfrac{1}{\gamma\sqrt{1-i\beta\omega a
+o(\omega a)}}
\simeq
\dfrac{1}{\gamma}\left(1+i\dfrac{\beta}{2}\omega a\right)+o(\omega a),
\end{eqnarray*}
and 
\begin{eqnarray*}
\dfrac{\omega}{c_0\K_{-a}(\omega)}
=
\dfrac{1}{\gamma\sqrt{1+i\beta\omega a
+o(\omega a)}}
\simeq
\dfrac{1}{\gamma}\left(1-i\dfrac{\beta}{2}\omega a\right)+o(\omega a).
\end{eqnarray*}
Then the following result holds.

\begin{lem}\label{lem-A-asymp}

Let $\phi\in \C_{0}^\infty\left([0,\infty)\right)$, where $\C_{0}^\infty$ is the space of $\C^\infty-$functions with compact support in $[0,\infty)$. Then, 

\begin{enumerate}
\item Up to leading order of attenuation parameter $a$
$$
\OL_a\left[\phi(\cdot)\right](t)\simeq
\dfrac{1}{\gamma^2}\phi\left(\dfrac{t}{\gamma}\right)
+\dfrac{\beta a}{2\gamma^3}\big[\phi'+(t\phi)''\big]\left(\dfrac{t}{\gamma}\right)\quad\text{as}\quad a\to 0,
$$ 

\item for all $\rho>0$, up to leading order of attenuation parameter $a$
$$
{\OL}_{-a,\rho}^*\left[\phi(\cdot)\right](t)\simeq
\dfrac{1}{\gamma}\OP_{\rho}[\phi(\cdot)]\left({\gamma{t}}\right)
-\dfrac{\beta a}{2\gamma^2}\OP_{\rho}\big[\phi'+(t\phi)''\big]\left({\gamma}{t}\right)
\quad\text{as}\quad a\to 0,
$$ 

\item for all $\rho>0$, up to leading order of attenuation parameter $a$
$$
\OL_{-a,\rho}^*\left[\OL_a\left[\phi(\cdot)\right]\right](t)\simeq
\dfrac{1}{\gamma^3}\OP_{\rho}[\phi(\cdot)](t) \quad\text{as}\quad a\to 0,
$$ 
\end{enumerate}
where $\OP_{\rho}$ is defined by
\begin{eqnarray}
\OP_{\rho}: &&\mathbb{S}'(\RR)\longrightarrow \mathbb{S}'(\RR)
\nonumber
\\
&&\phi(t)\longmapsto\frac{1}{2\pi} \int_{|\omega|\leq \rho} e^{- i \omega t } \F[\phi](\omega) d\omega.
\label{eq:P}
\end{eqnarray}
\end{lem}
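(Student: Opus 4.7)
The plan is to substitute the asymptotic expansions of $c_0\K_{\pm a}(\omega)$ and $\omega/(c_0\K_{\pm a}(\omega))$ obtained immediately above the lemma into the defining integrals \eqref{eq:OL} and \eqref{eq:OLt}, Taylor-expand the exponentials $e^{-ic_0\K_{\pm a}(\omega)\tau}$ and $e^{-ic_0\K_{\pm a}(\omega)t}$ to first order in $a$, interchange the order of integration by Fubini, and identify the resulting inner integrals as (possibly low-pass filtered) Fourier transforms of $\phi(\tau)$ and $\tau\phi(\tau)$ evaluated at the rescaled frequency $\gamma\omega$. The outer $\omega$-integration is then an inverse Fourier transform at rescaled time, producing the $t/\gamma$ or $\gamma t$ arguments in the statement.

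For Claim 1, writing $e^{-ic_0\K_a(\omega)\tau}\simeq e^{-i\gamma\omega\tau}\bigl(1-\tfrac12\gamma\beta a\omega^2\tau\bigr)$ and $\omega/(c_0\K_a(\omega))\simeq\gamma^{-1}(1+i\beta a\omega/2)$ and substituting into \eqref{eq:OL}, the $O(1)$ term is $\frac{1}{2\pi\gamma}\int_\R\hat\phi(\gamma\omega)e^{i\omega t}d\omega=\gamma^{-2}\phi(t/\gamma)$. The two $O(a)$ contributions come respectively from the prefactor correction $i\beta a\omega/2$, which inserts an extra factor $\omega$ and hence a time derivative, yielding $\phi'(t/\gamma)/\gamma^3$; and from the exponential correction $-\tfrac12\gamma\beta a\omega^2\tau$, which simultaneously replaces $\phi$ by $\tau\phi$ in the Fourier integral and inserts $\omega^2$, yielding a second time derivative of $t\phi$, hence $(t\phi)''(t/\gamma)/\gamma^3$. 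Their sum is the claimed correction.

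For Claim 2 the same manipulations apply, modulo three changes: (i) the substitution $a\mapsto -a$ flips the sign of every $O(a)$ contribution in both the prefactor and the exponential, which accounts for the minus in front of the correction; (ii) the inner integral carries $e^{i\omega\tau}$ while the outer carries $e^{-ic_0\K_{-a}(\omega)t}$, reversing the Fourier/inverse-Fourier roles and replacing the rescaling $t\mapsto t/\gamma$ by $t\mapsto \gamma t$; and (iii) the cut-off $|\omega|\le\rho$ turns the unrestricted inverse Fourier transform into $\OP_\rho$ and keeps $\Im\{c_0\K_{-a}(\omega)\}t=\gamma\beta a\omega^2 t/2$ bounded, which legitimises the Taylor expansion of the exponential. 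For Claim 3 I would substitute the expansion from Claim 1 into $\OL^*_{-a,\rho}$ and apply Claim 2 to each summand, retaining only the $O(1)$ contribution: the scalings $t\mapsto t/\gamma$ and $t\mapsto\gamma t$ compose to the identity, the prefactors $\gamma^{-2}$ and $\gamma^{-1}$ multiply to $\gamma^{-3}$, the frequency cut-off becomes $\OP_\rho$, and the surviving leading-order term is $\gamma^{-3}\OP_\rho[\phi](t)$.

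The main technical burden is bookkeeping: the signs and placement of the rescaling factors in Claim 2 differ from those in Claim 1 because the inner and outer exponentials use opposite conventions, and one must check that in the composition the $O(a)$ pieces, which do not cancel identically, are genuinely subleading for the stated claim. The hypothesis $\phi\in\C_0^\infty([0,\infty))$ ensures that $\hat\phi$ and $\F[\tau\phi]$ decay rapidly, so Fubini is justified and the remainder in the Taylor expansion of the exponential is $o(a)$ uniformly on the relevant $\omega$-region (the whole real line for $\OL_a$, and the compact set $|\omega|\le\rho$ for $\OL^*_{-a,\rho}$).
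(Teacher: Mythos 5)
Your proposal is correct and follows essentially the same route as the paper's own proof: substitute the first-order expansions of $c_0\K_{\pm a}(\omega)$ and $\omega/(c_0\K_{\pm a}(\omega))$ into the defining integrals, expand the exponentials to first order in $a$, identify the $O(1)$ and $O(a)$ terms as (truncated) Fourier inversions giving the rescaled arguments $t/\gamma$ and $\gamma t$, and obtain Statement 3 as an immediate consequence of the first two. Your remarks on the sign flips, the role of the cut-off $\rho$, and the rapid decay of $\widehat{\phi}$ match the paper's (informal) treatment, so no changes are needed.
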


\begin{proof}
We prove only Statements $1$ and $2$. Statement $3$ is an immediate consequence of the first two.

\begin{enumerate}
\item  As $a\to 0$ the attenuation operator $\OL_a$ can be approximated by:
\begin{eqnarray*}
\OL_a[\phi](t)
&\simeq&\dfrac{1}{2\pi\gamma}\int_\R
\left(1+i\dfrac{\beta}{2}\omega a\right)
\left\{\int_{\R^+}e^{-\gamma\frac{\beta}{2}a\omega^2 \tau}\phi(\tau)e^{-i\gamma\omega \tau }\,d\tau\right\} e^{i\omega t}d\omega+o(a),
\\
\nm
&\simeq&
\dfrac{1}{2\pi\gamma}\int_\R
\left(1+i\dfrac{\beta}{2}\omega a\right)
\left\{\int_{\R^+}\left(1-\gamma\dfrac{\beta}{2}a\omega^2 \tau\right)\phi(\tau)e^{-i\gamma\omega \tau}\,d\tau\right\} e^{i\omega t}\,d\omega +o(a),
\\
\nm
&\simeq&
\dfrac{1}{2\pi\gamma} 
\iint_{\R\times\R^+}\phi(\tau)e^{-i\gamma\omega \tau}
e^{i\omega t}\,d\tau\,d\omega
\\
\nm
&&\quad+
\dfrac{\beta a}{2}\dfrac{1}{2\pi\gamma}
\int_\R i\omega\int_{\R^+}\phi(\tau)e^{-i\gamma\omega \tau}
e^{i\omega t}\,d\tau\,d\omega
\\
\nm
&&
\quad
+
\dfrac{\beta a}{2}\dfrac{1}{2\pi}
\int_\R (i\omega)^2\int_{\R^+}\left[\tau\phi(\tau)\right]e^{-i\gamma\omega \tau}
e^{i\omega t}\,d\tau\,d\omega +o(a),
\\
\nm
&\simeq&
\dfrac{1}{\gamma^2}\phi\left(\dfrac{t}{\gamma}\right)
+\dfrac{\beta a}{2\gamma^3}\big[\phi'+(t\phi)''\big]\left(\dfrac{t}{\gamma}\right) + o(a),
\end{eqnarray*}
which is the required result.

\item 
Let the support of $\phi$ be contained in $\left[0,t_{\rm{max}}\right]\subsetneq [0,\infty)$. As $a\to 0$ the operator ${\OL}_{-a,\rho}^*$ can be approximated by:
\begin{eqnarray*}
{\OL}_{-a,\rho}^*\left[\phi(\cdot)\right](t)
&\simeq&\dfrac{1}{2\pi\gamma}
\int_{|\omega|\leq \rho}\dfrac{\omega}{c_0\K_{-a}(\omega)}
\exp\big\{-ic_0\K_{-a}(\omega)t\big\}
\\
\nm
&&
\qquad\qquad\qquad
\times
\left\{\int^{t_{\rm{max}}}_0 \phi(\tau)e^{i\omega \tau}\,d\tau\right\}d\omega
+o(a),
\\
\nm 
&\simeq&
\dfrac{1}{2\pi\gamma}\int_{|\omega|\leq \rho}\int^{t_{\rm{max}}}_0 \left(1-i\dfrac{\beta}{2}\omega a\right)
\\
\nm
&&
\qquad\qquad\times
\exp\left\{-i\gamma\omega\left(1+i\frac{\beta}{2}\omega a\right)t\right\}
\phi(\tau)e^{i\omega \tau}\,d\tau\,d\omega
+o(a),
\end{eqnarray*}
where we have made use of the approximation of lossy wavenumber $\K_{-a}$ for $a\ll c_0\omega^{-1}$. 

On further simplifications, we arrive at 
\begin{eqnarray*}
{\OL}_{-a,\rho}^*\left[\phi(\cdot)\right](t)
&\simeq&
\dfrac{1}{2\pi\gamma}\iint_{[-\rho,\rho]\times[0,t_{\rm{max}}]}  \left(1-i\dfrac{\beta}{2}\omega a\right)
e^{\gamma\frac{\beta}{2}\omega^2 a t}
\\\nm
&& \qquad\qquad\qquad\times
\phi(\tau)e^{-i\gamma\omega t+i\omega \tau}\,d\tau\,d\omega 
+o(a),
\\
\nm
&\simeq&
\dfrac{1}{2\pi\gamma}\iint_{[-\rho,\rho]\times[0,t_{\rm{max}}]}  \left(1-i\dfrac{\beta}{2}\omega a\right)
\left(1+\gamma\dfrac{\beta}{2}a\omega^2 t\right)
\\
&&\qquad\qquad\qquad\times
\phi(\tau)e^{-i\gamma\omega t}\,e^{i\omega\tau}\,d\tau\,d\omega +O(a),
\\
\nm
&\simeq&
\dfrac{1}{2\pi\gamma}\iint_{[-\rho,\rho]\times\R^+} 
\phi(\tau)e^{-i\gamma\omega t}e^{i\omega \tau}\,d\tau\,d\omega
\\
\nm
&&
- 
\dfrac{\beta a}{2}\dfrac{1}{2\pi\gamma}
\int_{|\omega|\leq \rho} i\omega\int_{\R^+}
\phi(\tau)e^{-i\gamma\omega t}e^{i\omega \tau}\,d\tau\,d\omega
\\
\nm
&&
-
\dfrac{\beta a}{2}\dfrac{1}{2\pi}
\int_{|\omega|\leq \rho} (i\omega)^2\int_{\R^+}\left[\tau\phi(\tau)\right]e^{-i\gamma\omega t}
e^{i\omega \tau}\,d\tau\,d\omega +o(a).
\end{eqnarray*}

Finally on introducing the operator $\OP_\rho$, we arrive at
\begin{eqnarray*}
{\OL}_{-a,\rho}^*\left[\phi(\cdot)\right](t)
&\simeq&
\dfrac{1}{\gamma}\OP_{\rho}[\phi(\cdot)]\left({\gamma{t}}\right)
-\dfrac{\beta a}{2\gamma^2}\OP_{\rho}\big[\phi'+(t\phi)''\big]\left({\gamma}{t}\right) + o(a).
\end{eqnarray*}
\end{enumerate}
\end{proof}

\begin{rem}
We precise that the Lemma \ref{lem-A-asymp}, can be proved formally using the argument of stationary phase theorem or steepest decent theorem as in \cite{PAT, otmarK, hormander}.
\end{rem}

\subsection{Time reversal of the electric field in lossy media}\label{ss:em-tr-att}

Suppose we are able to collect the attenuated electric field, $\E_{a}$, for all $t\in[0,T]$ over $\Gamma$, that is we are in possession  of the data set 
$$
\W_{e,a}:=\bigg\{\bd_{e,a}(\bx,t):=\E_{a}(\bx,t) : (\bx,t)\in\Gamma\times [0,T]\bigg\}.
$$
If we simply time-reverse and re-emit the measured data $\bd_{e,a}$ in attenuating medium, the electric field is attenuated twice, that is, both in direct and back-propagation. Therefore, resolution of the reconstruction, when localizing the spatial support of the sources, is forsaken. In this section, we present two time reversal strategies for localizing $\rm{supp}\big\{\J\big\}$ in a lossy medium. 

\subsubsection{Adjoint operator approach}

In order to compensate for losses, we back propagate the measured data using the adjoint wave operator. Unfortunately the adjoint wave problem is  severely ill-posed, somewhat similar phenomenon was observed in acoustic and elastic media. Therefore, high frequencies must be suppressed as in the acoustic and elastic cases; refer to \cite[Remark 2.3.6]{PhD}. More precisely, let 
$$
\E_{a}^s(\bx,t):=\F^{-1}\left[\hE_a^s(\bx,\cdot)\right](t),
$$ 
be the adjoint (time reversed) field corresponding to datum $\bd_{e,a}(\bx,t)$ recorded at time $t=s$ propagating inside the medium, where $\hE_a^s(\bx,\omega)$ is the solution to adjoint lossy Helmholtz equation
$$
\left(\nabla\times\nabla\times \hE_a^s-\K^2_{- a}\hE_a^s\right)(\bx,\omega)=-i\omega\mu_0\overline{\bd}_{e,a}(x,\omega)\delta_{\Gamma}(\bx),\quad (\bx,\omega)\in\RR^3\times\RR,
$$ 
and let  
$$
\E_{a,\rho}^s(\bx,t):= \OP_\rho\left[\E_{a}^s(\bx,\cdot)\right](t),
$$ 
where $\rho$ is the cutoff frequency. Here $\rho$ is chosen in such a way that $\E^s_{a,\rho}$ does not explode whereas the resolution of the time reversal algorithm remains reasonably intact, refer to  \cite[Remark 2.3.6]{PhD} for further details. The aim of this section is to justify that $\Ier(\bx)$ is an approximation of $\J(\bx)$ up to leading order of Debye's loss parameter $a$, when $\rho\to +\infty$, where
\begin{equation}
\label{em:Itra}
\ds\Ier(\bx):= \dfrac{\eps_0\gamma^3}{\mu_0c_0}\int_0^T \E_{a,\rho}^s(\bx,T)ds, \quad\forall \bx\in\Omega.
\end{equation}

We conclude this subsection with the following key result. It simply states that the adjoint operator approach provides a localization of the $\rm{supp}\big\{\J\big\}$  with a correction to the attenuation effects up to leading order of the damping parameter $a$.

\begin{thm}\label{thm:em-att}
For all $\bx\in\Omega$ sufficiently far from $\partial\Omega$, compared to wavelength, the truncated time-reversal imaging functional $\Ier$ satisfies, 
$$
\Ier(\bx)=\mathcal{J}_{0,\rho}(\bx)+O(a),
$$
where 
$$
\mathcal{J}_{0,\rho}(\bx) := -\ds\dfrac{\eps_0}{\mu_0c_0}\iint_{[0,T]\times \Gamma}\G^{\rm{ee}}_0(\bx-\xi,\tau)\OP_\rho\left[\bd_e(\xi,\cdot)\right](\tau)d\sigma(\xi)d\tau.
$$
Moreover, 
$$
\mathcal{J}_{0,\rho}(\bx)\to \J(\bx)\quad\text{as}\quad \rho\to+\infty.
$$
\end{thm}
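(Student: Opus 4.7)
The plan is to reduce the lossy imaging functional $\Ier$ to the loss-less template $\mathcal{J}_{0,\rho}$ up to an error of order $a$ by exploiting the composition identity $\OL_{-a,\rho}^*\circ\OL_a \simeq \gamma^{-3}\OP_\rho$ from Lemma~\ref{lem-A-asymp}(3); the prefactor $\gamma^3$ in \eqref{em:Itra} has been chosen precisely to absorb the $\gamma^{-3}$ produced by this composition. Once $\Ier \approx \mathcal{J}_{0,\rho}$ is established, the limit $\rho\to\infty$ is handled by noting that $\OP_\rho$ tends to the identity and appealing to Theorem~\ref{thm:em}.

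First I would write $\Ier$ as a space--time boundary integral. Since $\hE_a^s$ solves the adjoint lossy Helmholtz equation with wave-number $\K_{-a}$, its time-domain representation is the direct analogue of \eqref{EsGeg} with $\G_0^{\rm{ee}}$ replaced by $\G_{-a}^{\rm{ee}}$. Substituting into \eqref{em:Itra}, pushing the truncation $\OP_\rho$ onto the Green function, and changing variables $\tau=T-s$ would give
\begin{equation*}
\Ier(\bx) = -\frac{\eps_0 \gamma^3}{\mu_0 c_0} \int_\Gamma \int_0^T \G_{-a,\rho}^{\rm{ee}}(\bx-\xi,\tau)\, \bd_{e,a}(\xi,\tau)\, d\tau\, d\sigma(\xi),
\end{equation*}
where $\G_{-a,\rho}^{\rm{ee}}=\OL_{-a,\rho}[\G_0^{\rm{ee}}]$ is read off by comparing \eqref{eq:OLa} with the relation between $\hG_{-a}^{\rm{ee}}$ and $\hG_0^{\rm{ee}}$ derived in Section~\ref{s:att}.

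Next I would swap the attenuation operator onto the data. Since $\OL_{-a,\rho}^*$ is the kernel adjoint of $\OL_{-a,\rho}$, the inner time integral becomes $\int \G_0^{\rm{ee}}(\bx-\xi,\tau)\,\OL_{-a,\rho}^*[\bd_{e,a}(\xi,\cdot)](\tau)\,d\tau$. The key algebraic identity is $\bd_{e,a}=\OL_a[\bd_e]$, which follows by applying $\OL_a$ componentwise to the source representation $\bd_e(\xi,t)=-\int_\Omega \G_0^{\rm{ee}}(\xi-\by,t)\J(\by)\,d\by$ together with $\G_a^{\rm{ee}}=\OL_a[\G_0^{\rm{ee}}]$. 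Lemma~\ref{lem-A-asymp}(3) then yields $\OL_{-a,\rho}^*[\OL_a[\bd_e(\xi,\cdot)]](\tau)=\gamma^{-3}\OP_\rho[\bd_e(\xi,\cdot)](\tau)+O(a)$, and reinserting this cancels the $\gamma^3$ prefactor to give $\Ier(\bx)=\mathcal{J}_{0,\rho}(\bx)+O(a)$.

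For the second assertion, $\OP_\rho$ converges to the identity as $\rho\to+\infty$ by Fourier inversion, so $\mathcal{J}_{0,\rho}(\bx)$ converges to the boundary integral $-\frac{\eps_0}{\mu_0 c_0}\iint \G_0^{\rm{ee}}(\bx-\xi,\tau)\bd_e(\xi,\tau)\,d\tau\,d\sigma(\xi)$ that appears midway through the proof of Theorem~\ref{thm:em}, where it was shown to approximate $\J(\bx)$ via the Helmholtz--Kirchhoff identity (Lemma~\ref{EM-HKI}) and Lemma~\ref{lem2}. The step I expect to require the most care is the duality move: because $\hG_{-a}^{\rm{ee}}$ grows exponentially in $|\omega|$, the frequency cutoff $\rho$ is genuinely essential both to legitimize the adjoint swap and to secure a uniform $O(a)$ remainder from the asymptotic expansion. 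Outside of that subtlety, the proof is a tidy bookkeeping of operator compositions that parallels the loss-less case.
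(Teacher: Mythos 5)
Your proposal follows essentially the same route as the paper's proof: the identity $\G_{-a,\rho}^{\rm{ee}}=\OP_\rho[\G_{-a}^{\rm{ee}}]=\OL_{-a,\rho}[\G_0^{\rm{ee}}]$, the duality move placing $\OL_{-a,\rho}^*$ on the data, the relation $\bd_{e,a}=\OL_a[\bd_e]$, Lemma \ref{lem-A-asymp}(3) to cancel the $\gamma^3$ prefactor, and Theorem \ref{thm:em} for the limit $\rho\to+\infty$. You merely spell out the intermediate boundary-integral representation and the adjoint swap that the paper compresses into one line, so the argument is correct and matches the paper's.
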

\begin{proof} 
Notice that 
\begin{equation}
{\G}_{-a,\rho}^{\rm{ee}}(\bx,t):=\OP_\rho[{\G}_{-a}^{\rm{ee}}(\bx,\cdot)](t)={\OL}_{-a,\rho}\left[{\G}_0^{\rm{ee}}(\bx,\cdot)\right](t).\label{key}
\end{equation}
By virtue of \eqref{key},  $\Ier(\bx)$ can  be rewritten in the form
\begin{eqnarray*}
\Ier(\bx) = -\dfrac{\eps_0\gamma^3}{\mu_0c_0}\iint_{\Gamma\times[0,T]} \ds\G^{\rm{ee}}_0(\bx-\xi,s) \OL_{-a,\rho}^\ast\left[\bd_{e,a}(\xi,\cdot)\right](s) ds d\sigma(\xi).
\end{eqnarray*}

Remark as well that $\bd_{e,a}(x,t) = \OL_a\left[\bd_{e}(\bx,\cdot)\right](t)$, where $\bd_{e}(\bx,t)$ represents the ideal data, so that
\begin{eqnarray*}
 \Ier(\bx) &=& -\dfrac{\eps_0\gamma^3}{\mu_0c_0}\iint_{\Gamma\times[0,T]} \ds \G^{\rm{ee}}_0(\bx-\xi,s) \OL_{-a,\rho}^\ast\bigg[\OL_{a}\left[\bd_{e}(\xi,\cdot)\right]\bigg](s)ds d\sigma(\xi), 
 \\\nm
 &=&- \ds \dfrac{\eps_0}{\mu_0c_0}\iint_{\Gamma\times[0,T]}\ds\G^{\rm{ee}}_0 (\bx-\xi,s)
\OP_{\rho}\left[\bd_{e}(\xi,\cdot)\right](s)ds d\sigma(\xi)+o(a),
\\\nm
&=& \mathcal{J}_{0,\rho}(\bx)+o(a),
\end{eqnarray*}
by using Lemma \ref{lem-A-asymp}. Finally, from Theorem, \ref{thm:em}
\begin{eqnarray*}
-\dfrac{\eps_0}{\mu_0c_0}\iint_{\Gamma\times[0,T]} \ds\G^{\rm{ee}}_0(\bx-\by,s)  \OP_{\rho}\left[\bd_{e}(\xi,\cdot)\right](s) d\sigma(\xi) ds
&\underrightarrow{\rho\to\infty}& \Itr(\bx)\simeq \J(\bx),
\end{eqnarray*}
when $\bx$ is far away from the boundary $\Gamma$. The conclusion follows immediately.
\end{proof}

\subsubsection{Pre-processing approach}

According to Lemma \ref{lem-A-asymp}, for weakly attenuating media up to leading order
\begin{equation}
\OL_a\left[\phi(\cdot)\right](t)\simeq
\dfrac{1}{\gamma^2}\phi\left(\dfrac{t}{\gamma}\right)
+\dfrac{\beta a}{2\gamma^3}\big[\phi'+(t\phi)''\big]\left(\dfrac{t}{\gamma}\right).
\end{equation}
Therefore, its first order approximate inverse, $\OL^{-1}_{a,1}$, can be given by 
\begin{equation}
\OL^{-1}_{a,1}\left[\phi(\cdot)\right](t)
= \dfrac{1}{\gamma}\phi(\gamma t) -\frac{\beta a}{2\gamma^2}\left(\phi'+(t\phi)''\right)(\gamma t).
\end{equation}
In the similar fashion, using higher order asymptotic expansions, $k^{th}$ order approximate inverse $\OL^{-1}_{a,k}$ can be constructed.
Therefore, a pre-processing approach to time reversal can be described in two steps:
\begin{itemize}
\item[1. ] Filter the measured data in order to compensate for the attenuation effects using $\OL^{-1}_{a,k}$.
\item[2. ] Use classical time reversal (in ideal medium) with filtered data as input.

\end{itemize}
\begin{algorithm}
\caption{Pre-processing Time Reversal Algorithm: $k^{th}$ Order}
\label{EM:Itr}
\begin{algorithmic}[1]
\Require $\W_{e,a} =\bigg\{ \bd_{e,a}(x,t):=\E_a(x,t): \forall (x,t)\in\Gamma\times[0,T]\bigg\}$,\quad $0<a\ll c_0\omega^{-1}$ and $k\geq 1$.
\Procedure{Filter}{Pre-process $\bd_{e,a}(x,t)$.}
  
    \Return $\bd_{e}(x,t):=\OL_{a,k}^{-1}\big[\bd_{e,a}(x,\cdot)\big](t)$.
    
\EndProcedure

\Procedure{Time-Reversal}{Evaluate $\Itr(x)$.}
	
	\For{each $s\in [0,T]$}
		
		\State Construct $\E_0^s(x,T)$ for $x\in\Omega$ using  $\bd_{e}(x,t)$.
	\EndFor
	
	\State Evaluate $\Itr(x):=\ds\int_0^T \E_0^s(x,T)ds$.
	
	\Return $\Itr(x)$.
	
\EndProcedure
\end{algorithmic}

\Return $\mathcal{J}_{a,k}=\Itr(x)+o\left(a^k\right)$.
\end{algorithm}

\begin{rem}\label{rem-em-post} 
Pre-processing approach has two principle advantages over adjoint approach: 
\begin{enumerate}
\item It allows for higher order corrections to attenuation artifacts. Indeed using higher order approximations of the operator $\OL_{a}$, using stationary phase theorem \cite{hormander}, one can iteratively construct higher order pseudo-inverse $\OL^{-1}_{a,k}$. 
Consequently  filtered data using $\OL^{-1}_{a,k}$  yield a $k^{th}$ order correction for the attenuation artifacts. In this context, we refer to \cite{PAT, otmarK, HAetal-11b} 
\item It is much more stable numerically than the adjoint operator approach, as it has been observed for the case of acoustic imaging; refer to \cite{HAetal-11b, HAetal-11, PhD, book2}.
\end{enumerate}
\end{rem}

\section{Numerical illustrations}\label{s:num}

The aim here is to numerically illustrate the appositeness of the algorithms proposed in the previous sections. For brevity, we   consider the case of axis invariance (along $z-axis$) and restrict ourselves to a transverse electric case. In the sequel, we assume $\bx=(x_1,x_2,0)\in\Omega\subset X$ where  $X = \big[-l/2,l/2\big]\times \big[-l/2,l/2\big]\times\{0\}$ with periodic boundary conditions. For simplicity, we take $\eps_0=1=\eps_s$ and $\mu_0=1$, and consequently $c_0=1=\gamma$. Furthermore, we choose $\eps_\infty=0.5$ and therefore $\beta=0.5$. In order to numerically resolve the initial value problem \eqref{Et}, we use a \emph{Fourier spectral splitting} approach \cite{spectral} together with a  \emph{perfectly matched layer (PML)} technique \cite{pml} to simulate a free outgoing interface blended with the \emph{Strange's splitting} method \cite{strang}.

\begin{proof}[\textbf{Example 1}]

We choose $\Omega'$ to be a unit disk centered at origin such that $\Omega=\Omega'\times\{0\}$ and  place $1024$ equi-distributed sensors on its boundary.  We computed  the  electric fields over $(\bx,t) \in X\times\big[0,T\big]$ with $l = 4$ and $T=2$  and the space and time discretization steps are taken respectively $\tau = 2^{-13}T$ and $h= 2^{-9}l$.
In Figure \ref{f1}, a current source reconstruction using time reversal function $\Itr$ is compared to the initial current source density in a loss-less dielectric medium. The reconstructions clearly substantiate the accuracy and a high resolution of the time reversal algorithm $\Itr$. 

\end{proof}

\begin{figure}[!tbh]
\begin{center}
\includegraphics[width = 0.45\textwidth]{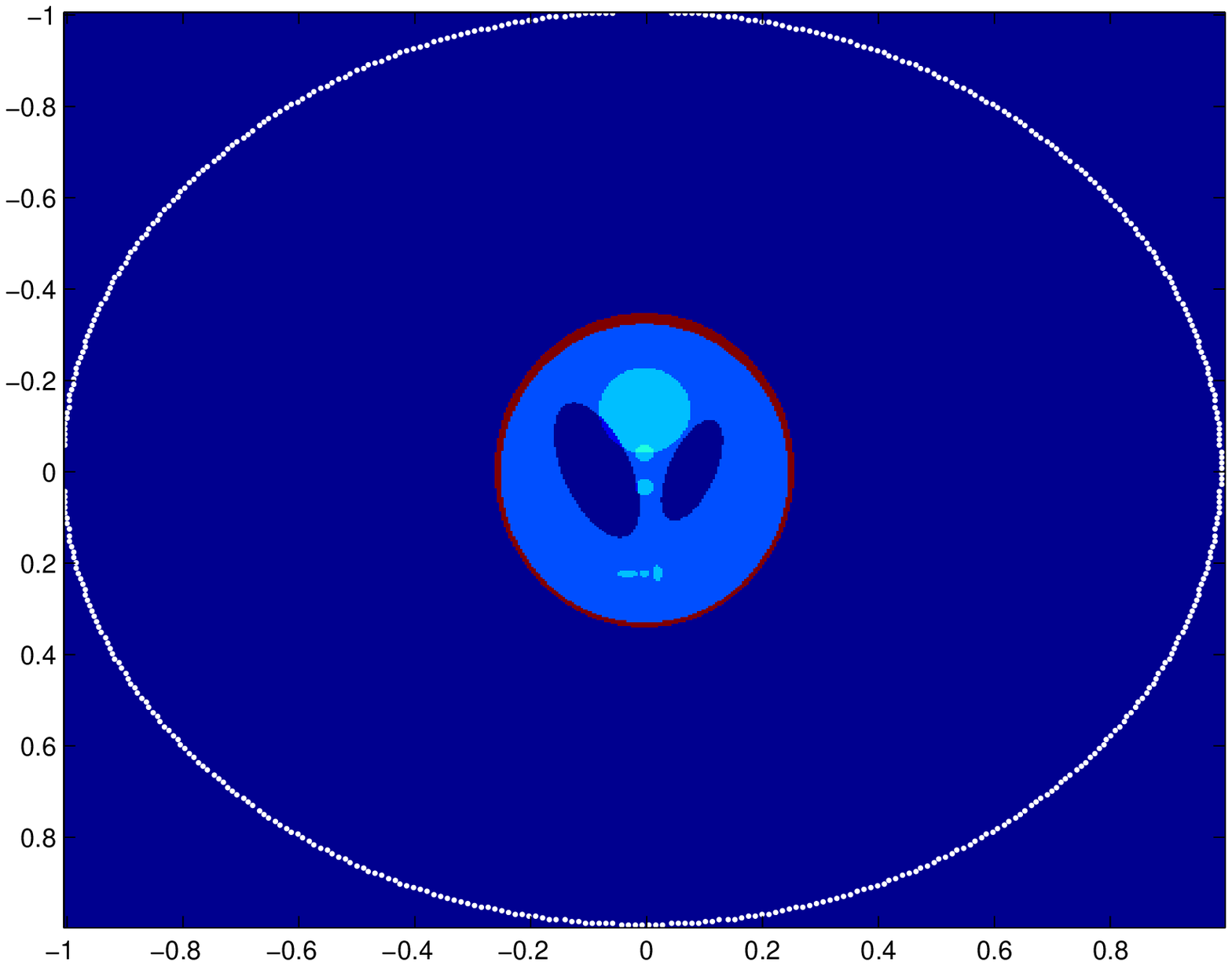}
\qquad
\includegraphics[width = 0.45\textwidth]{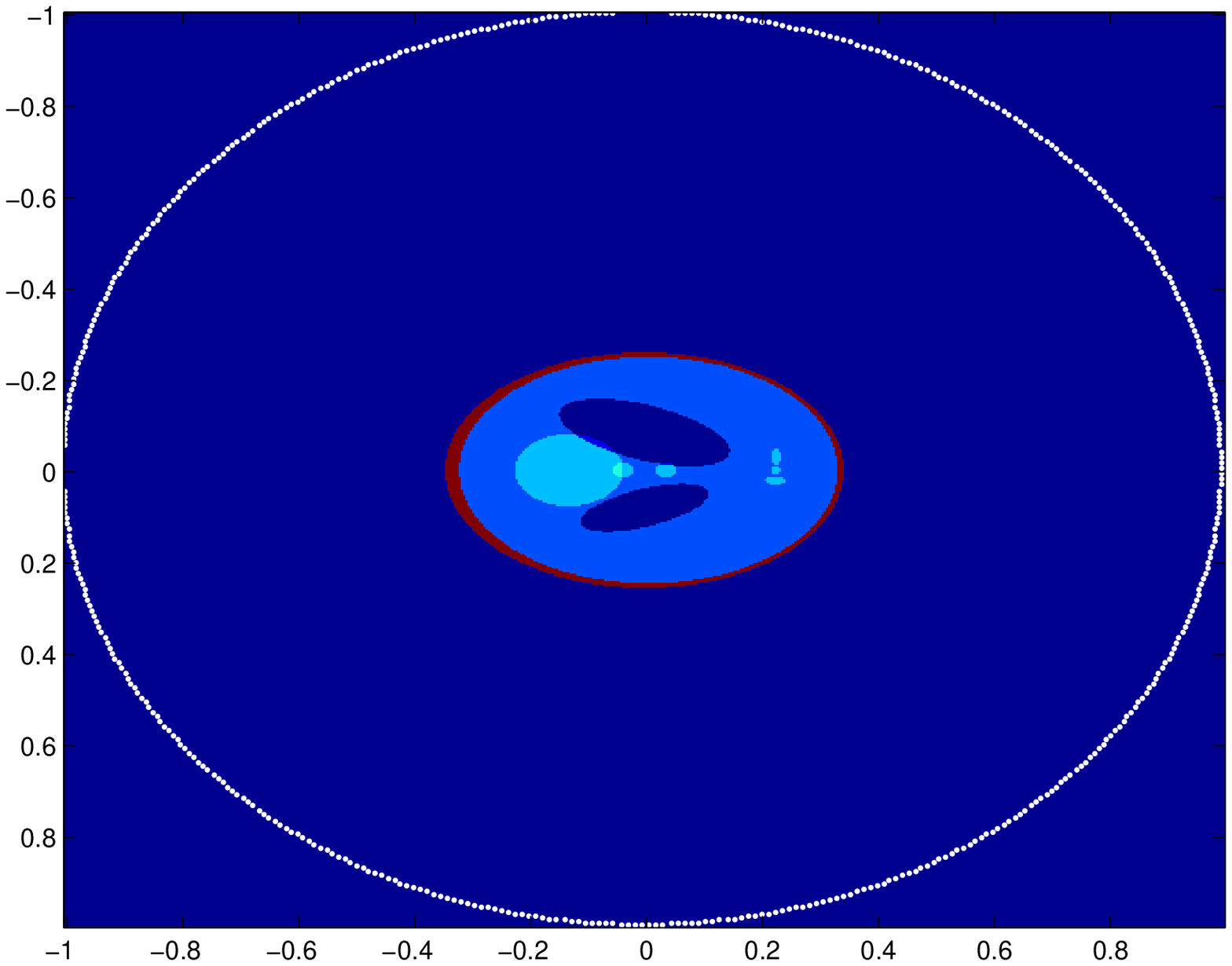}
\\
\includegraphics[width = 0.45\textwidth]{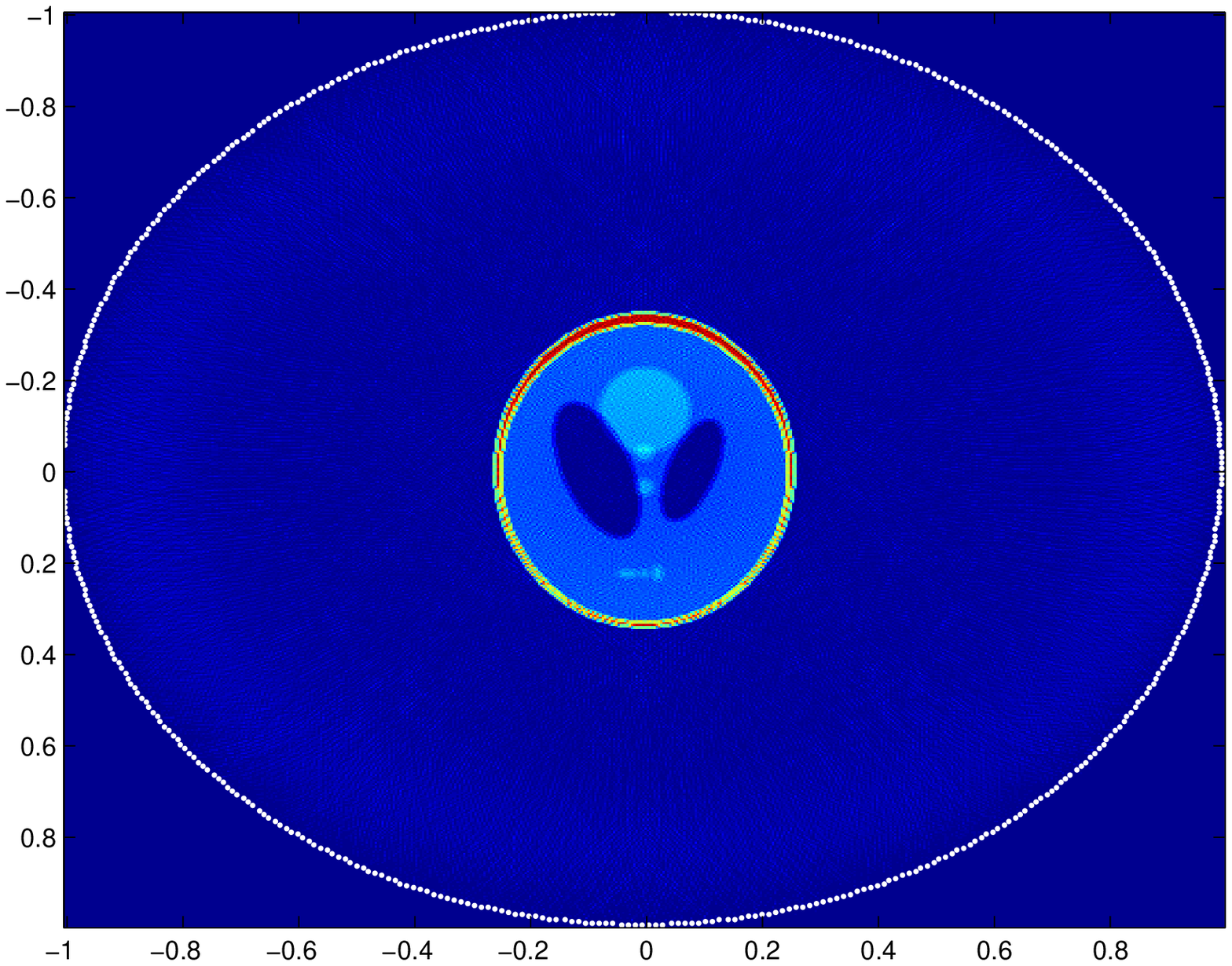}
\qquad
\includegraphics[width = 0.45\textwidth]{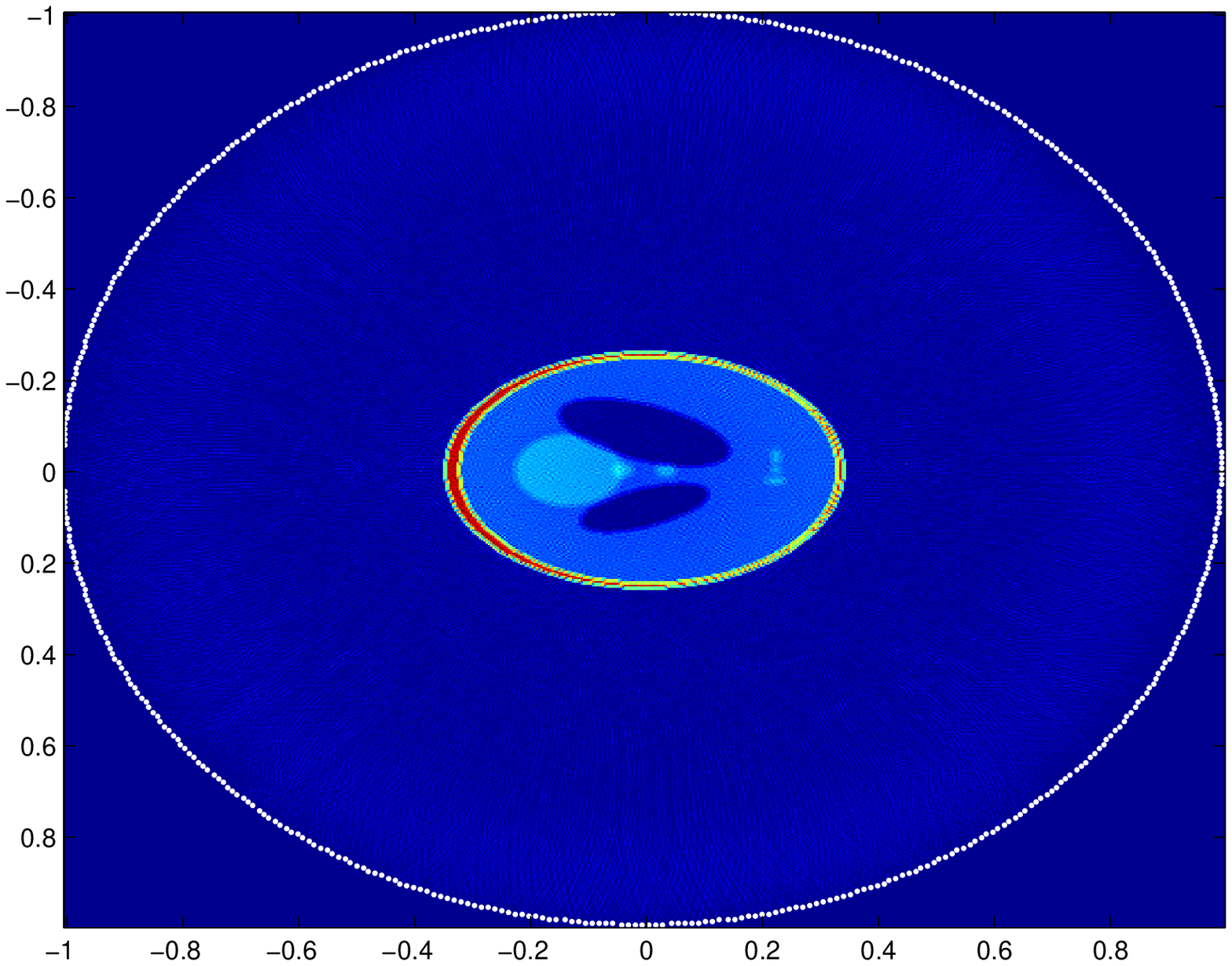}

\caption{Reconstruction of the initial current source $\J=(J_1,J_2,0)$ in a non-attenuation medium using time reversal functional $\Itr$. Top: Initial source, Bottom: Reconstruction. Left: First component of source density, Right: Second component of source density.}
\label{f1}
\end{center}
\end{figure}

\begin{proof}[\textbf{Example 2}]

We choose $\Omega'$ to be a unit disk centered at origin such that $\Omega=\Omega'\times\{0\}$ and  place $512$ equi-distributed sensors on its boundary. We computed  the electric fields over $(\bx,t) \in X\times\big[0,T\big]$ with $l = 4$ and $T=2$ and the space and time discretization steps are taken respectively $\tau = 2^{-13}T$ and $h= 2^{-8}l$. Let the Debye's loss parameter $a$ be $2\times 10^{-4}$. The adjoint wave operator approach for time reversal is tested with cut-off frequencies $\rho=15$ and $\rho=35$ in Figure \ref{f2}. The results clearly indicate an improvement in the resolution in successive reconstructions using $\Ier$ as compared to that using $\Itr$. The images without attenuation correction are blurry whereas the edges in images obtained using $\Ier$ are sharper than those obtained using $\Itr$ and the contrast is relatively higher.  
\end{proof}

\begin{figure}[!tbh]
\begin{center}
\includegraphics[height = 0.20\textheight]{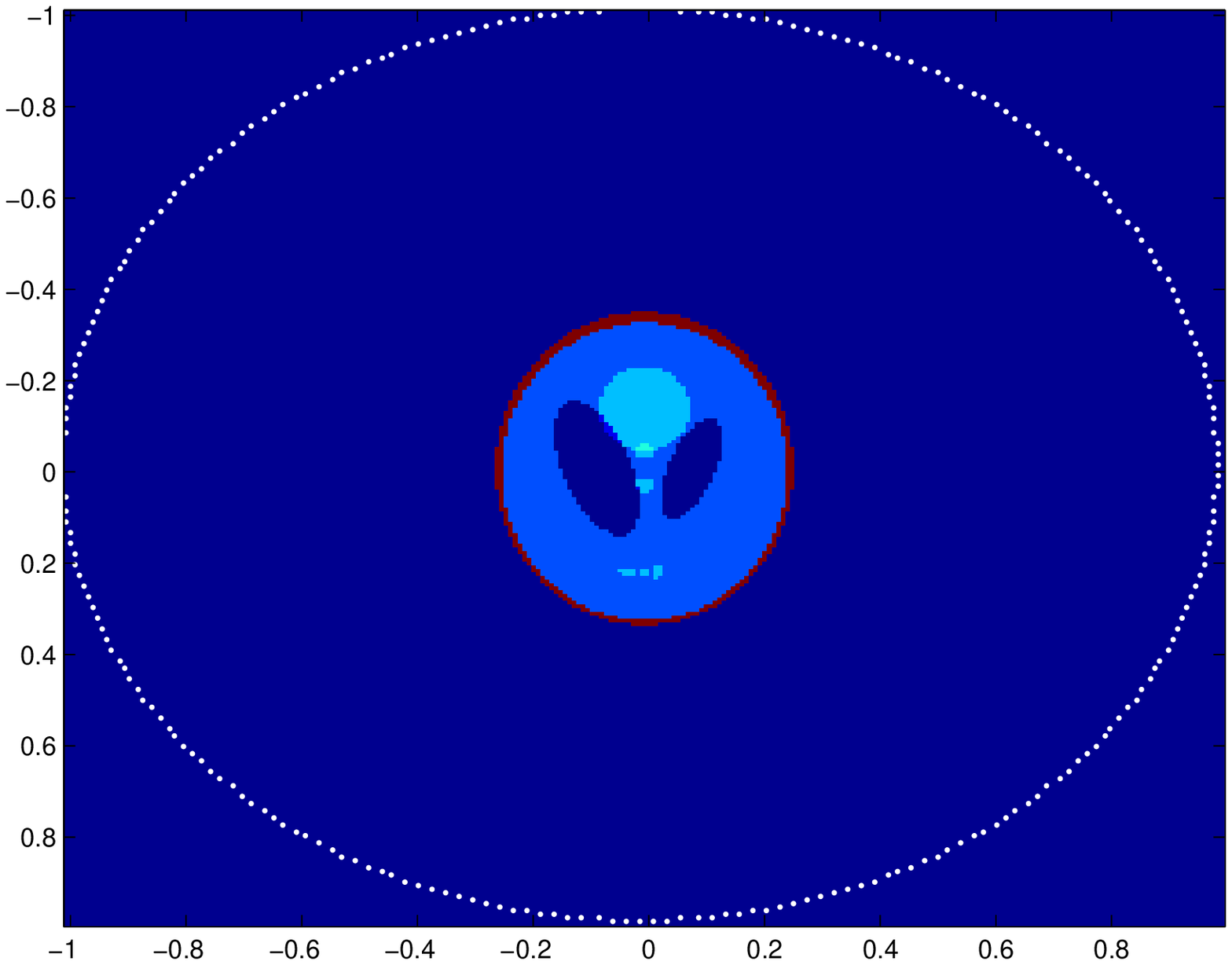}
\qquad
\includegraphics[height = 0.20\textheight]{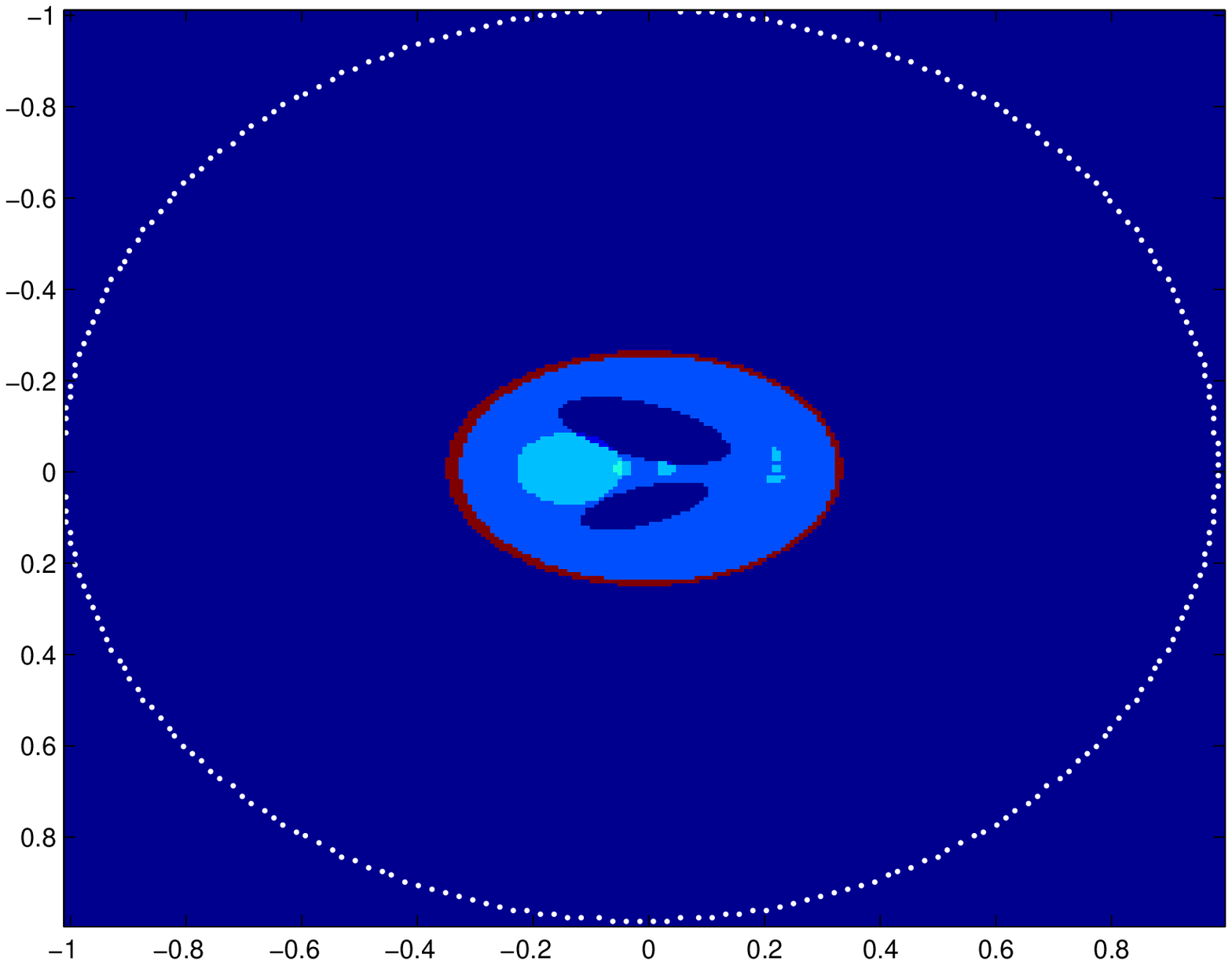}
\\
\includegraphics[height = 0.20\textheight]{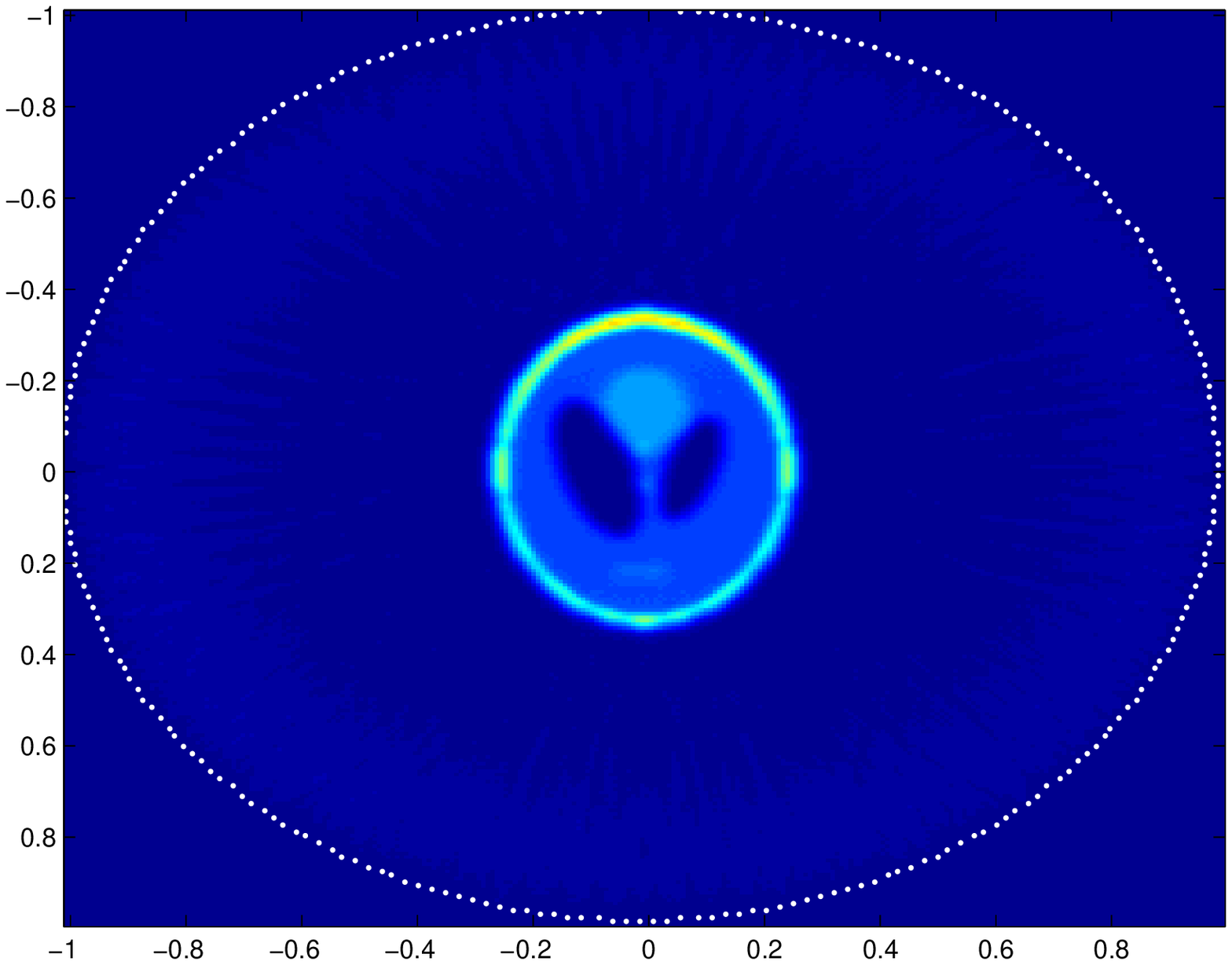}
\qquad
\includegraphics[height = 0.20\textheight]{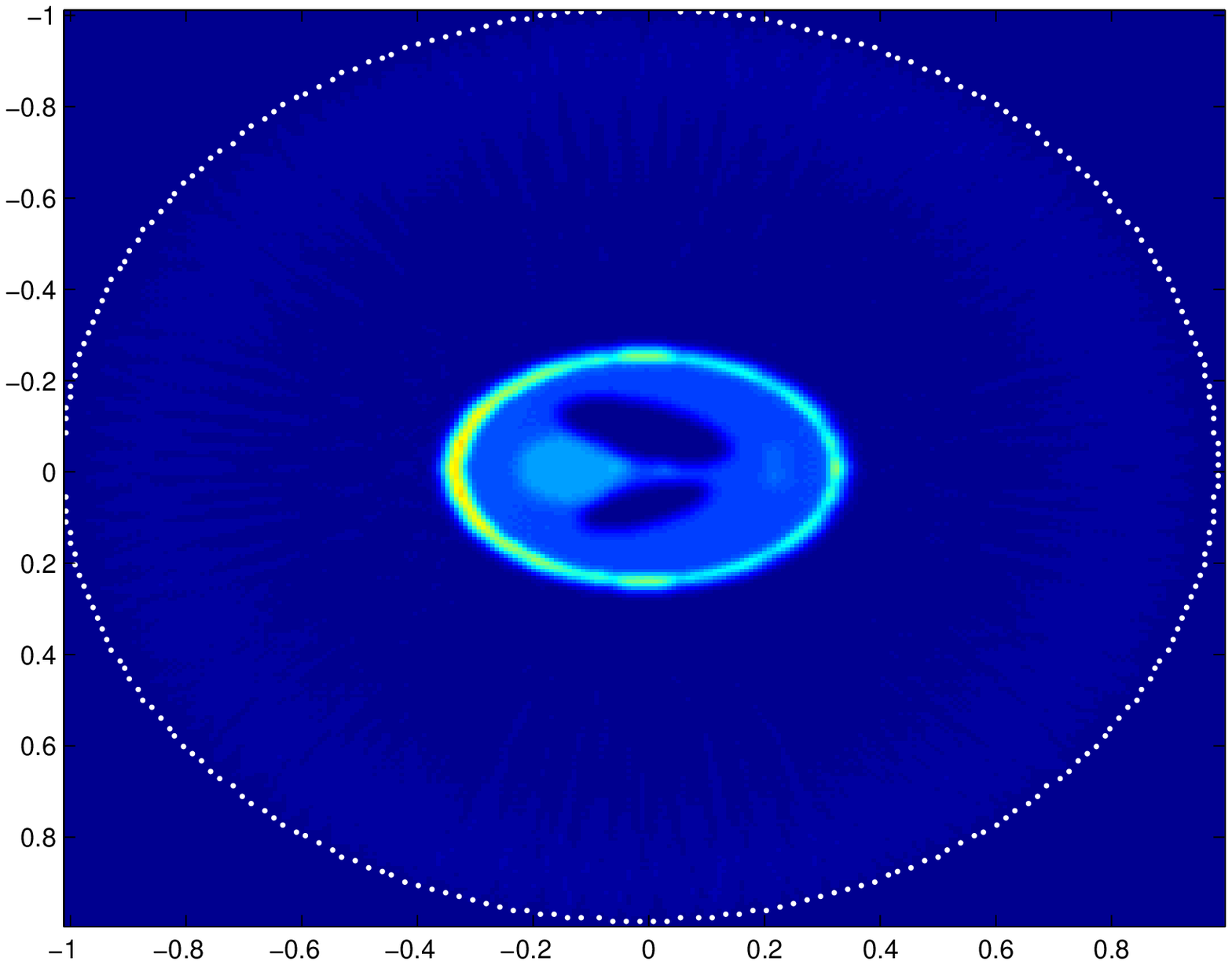}
\\
\includegraphics[height = 0.20\textheight]{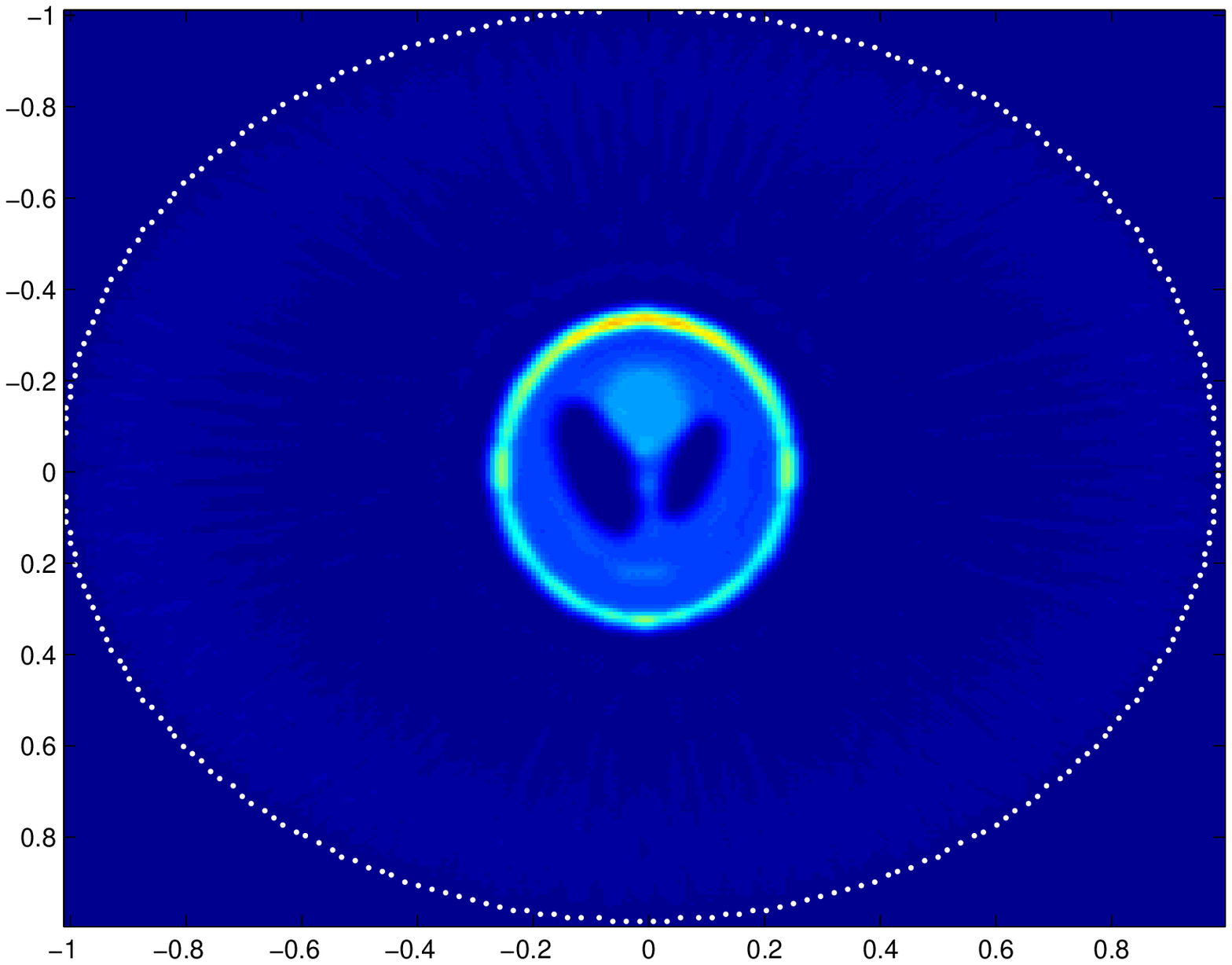}
\qquad
\includegraphics[height = 0.20\textheight]{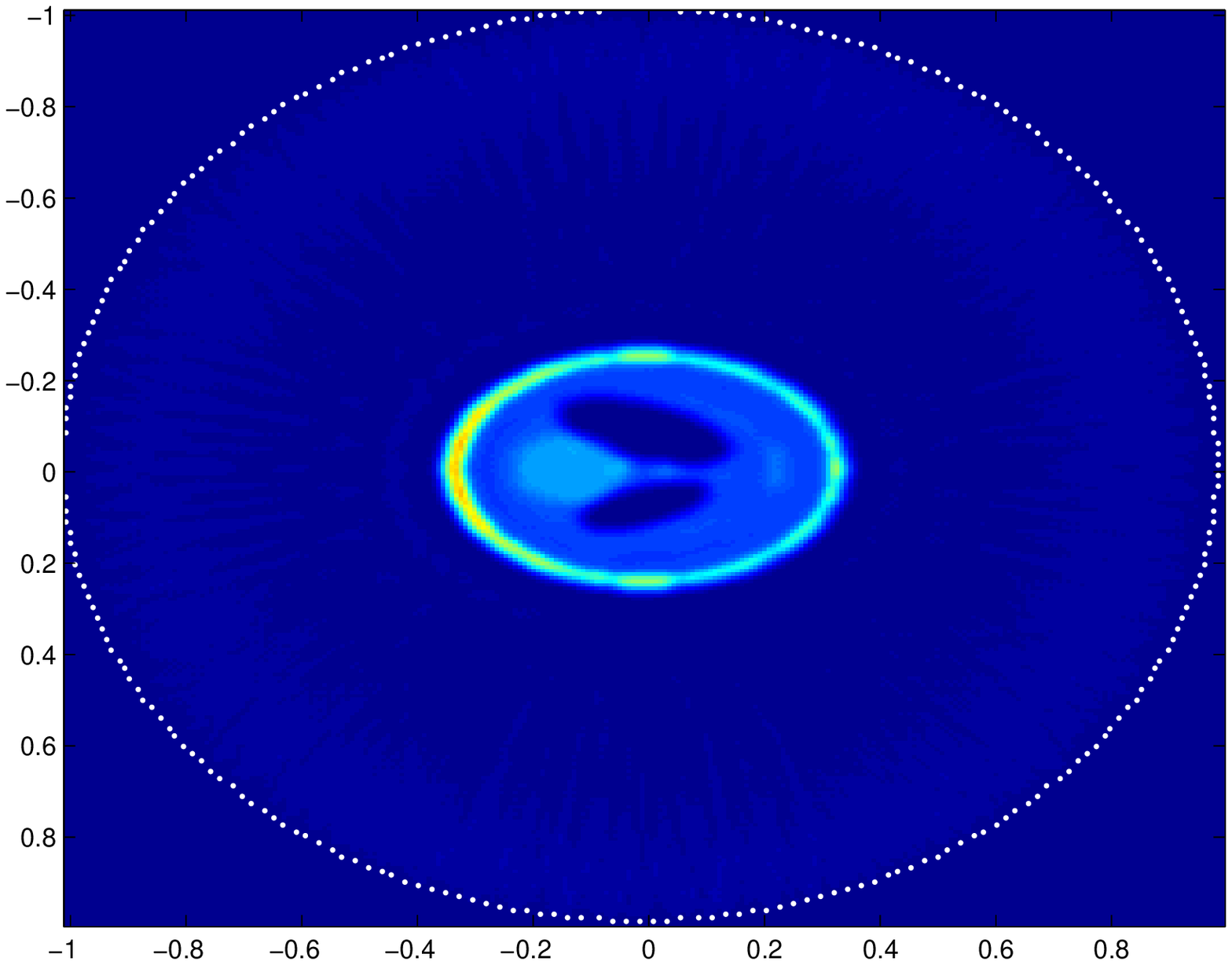}
\\
\includegraphics[height = 0.20\textheight]{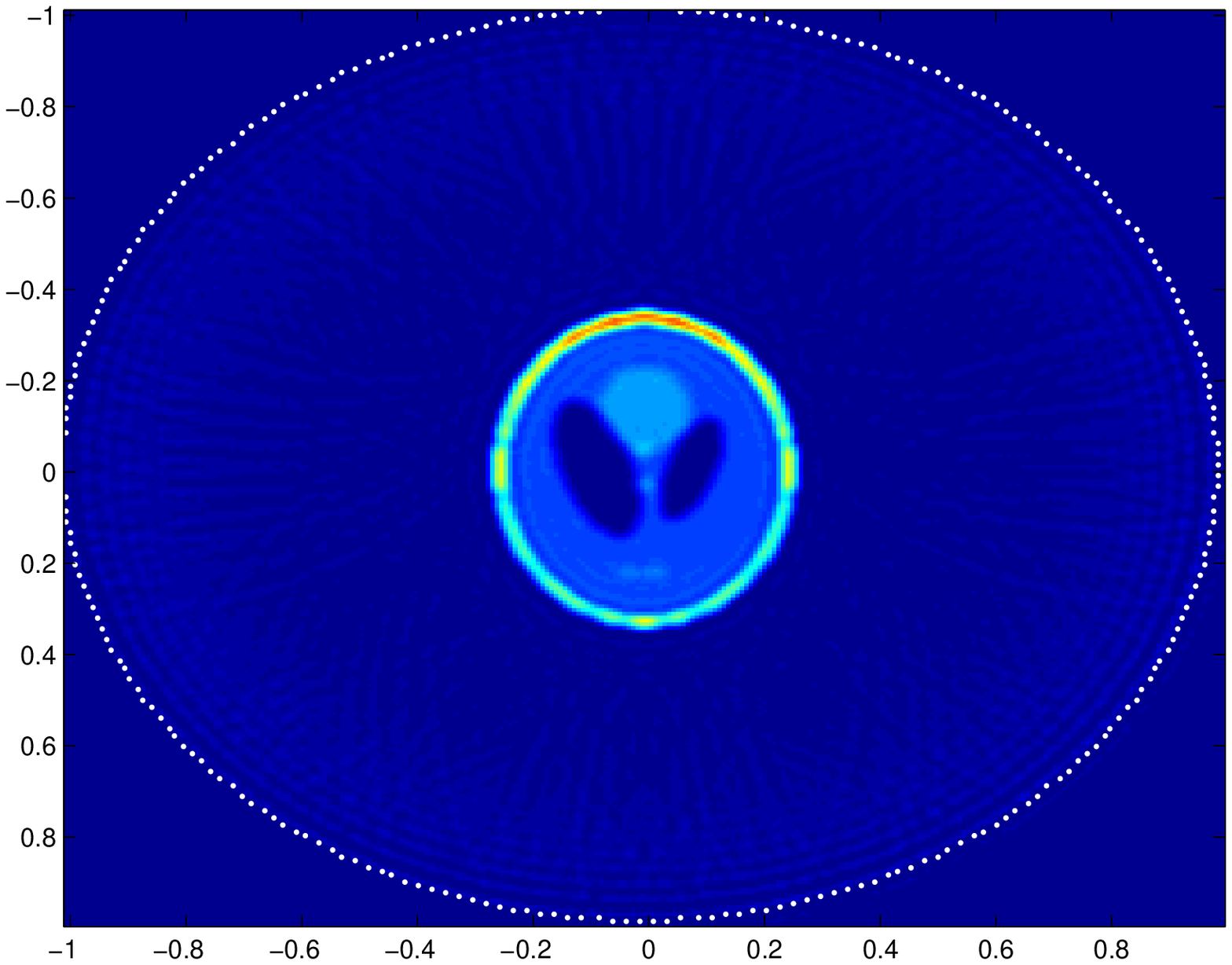}
\qquad
\includegraphics[height = 0.20\textheight]{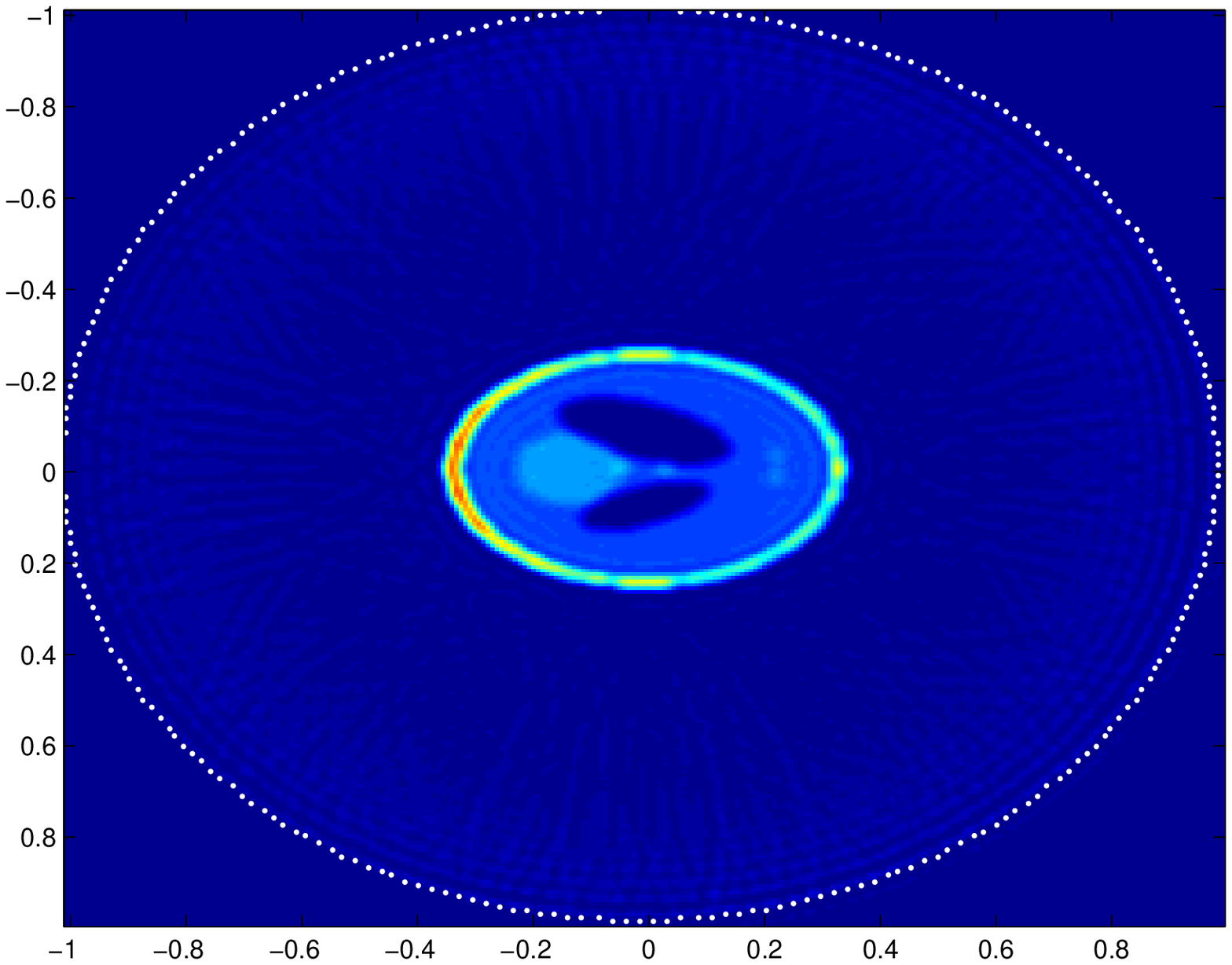}
\\ ~\\
\caption{Reconstruction of the spatial support of electric current source  $\J=(J_1,J_2,0)$ in an attenuating medium  with $a=2\times 10^{-4}$ using time reversal. Left: $J_1$, Right: $J_2$. Top to bottom: Initial source, reconstruction using $\Itr$ (without attenuation correction), reconstruction using $\Ier$ with $\rho=15$ and $\rho= 35$, respectively.}
\label{f2}
\end{center}
\end{figure}

\begin{proof}[\textbf{Example 3}]

Let $\Omega$, $l$, $T$, $\tau$ and $h$ be identical with Example $2$ and the Debye's loss parameter $a$ be $4\times 10^{-4}$. In Figure \ref{f3}, the adjoint wave operator approach for time reversal is tested with cut-off frequencies $\rho=15$ and $\rho=25$. Again, the improvement in the contrast and resolution can be remarked. Albeit, as predicted in the previous sections, increasing the cut-off frequency  induces numerical instability. To this end, the choice of truncation frequency $\rho$, of course as a function of attenuation parameter, is very critical. In this regard, we refer to \cite[Remark 2.3.6]{PhD} for a detailed discussion on the issue and for a threshold value of $\rho$ rendering stability while keeping the resolution intact.
\end{proof}

\begin{figure}[!tbh]
\begin{center}
\includegraphics[height = 0.20\textheight]{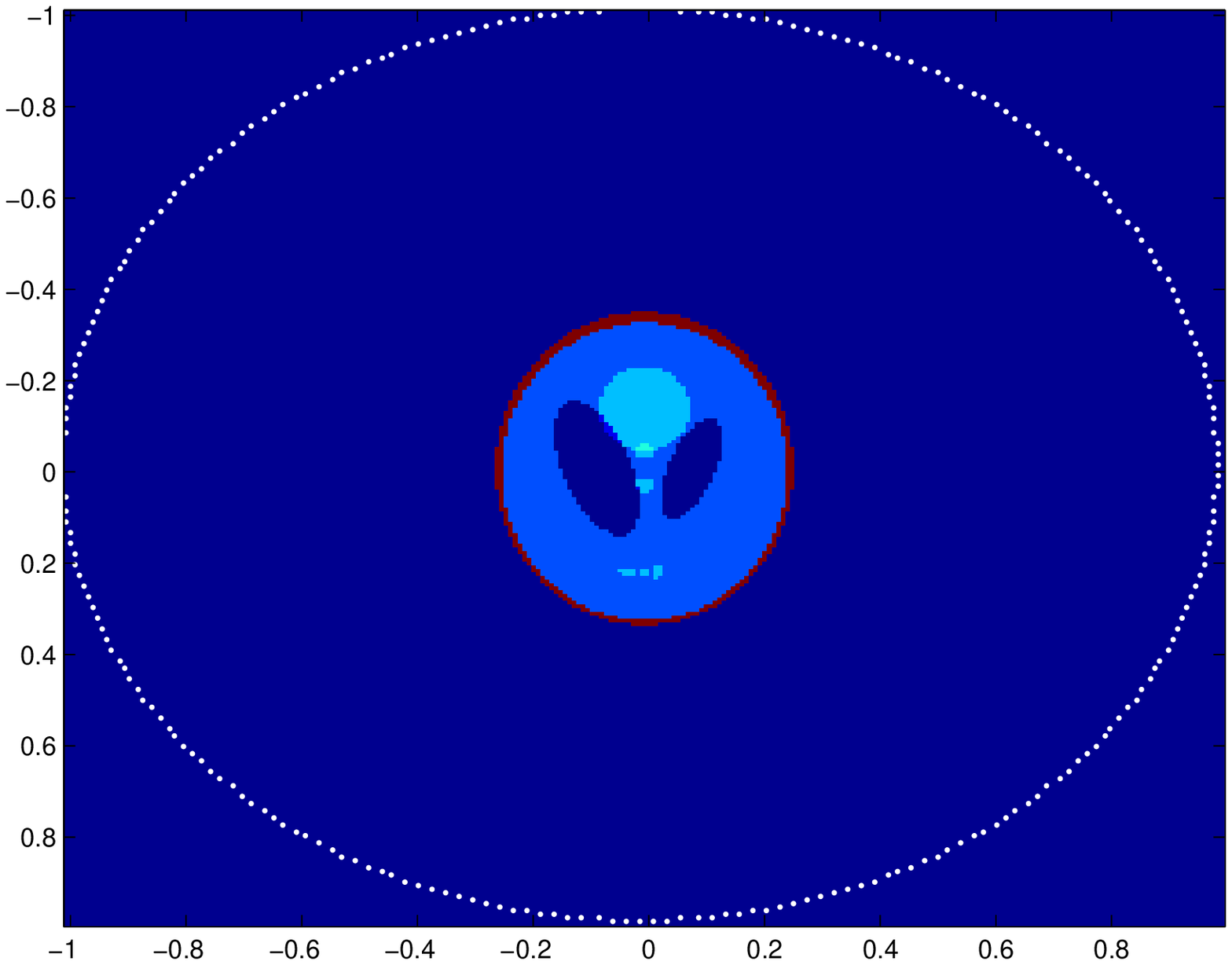} 
\qquad
\includegraphics[height = 0.20\textheight]{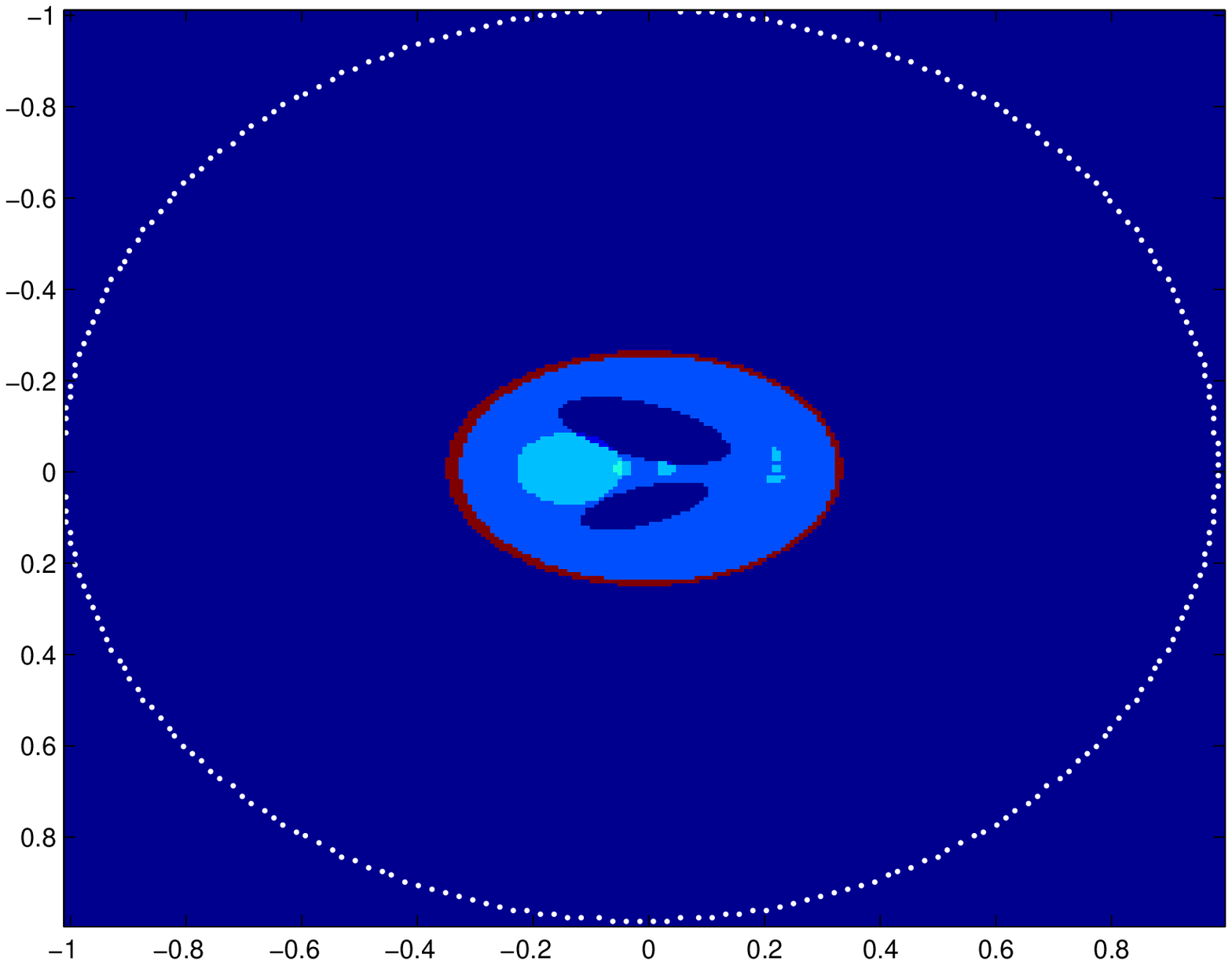} 
\\
\includegraphics[height = 0.20\textheight]{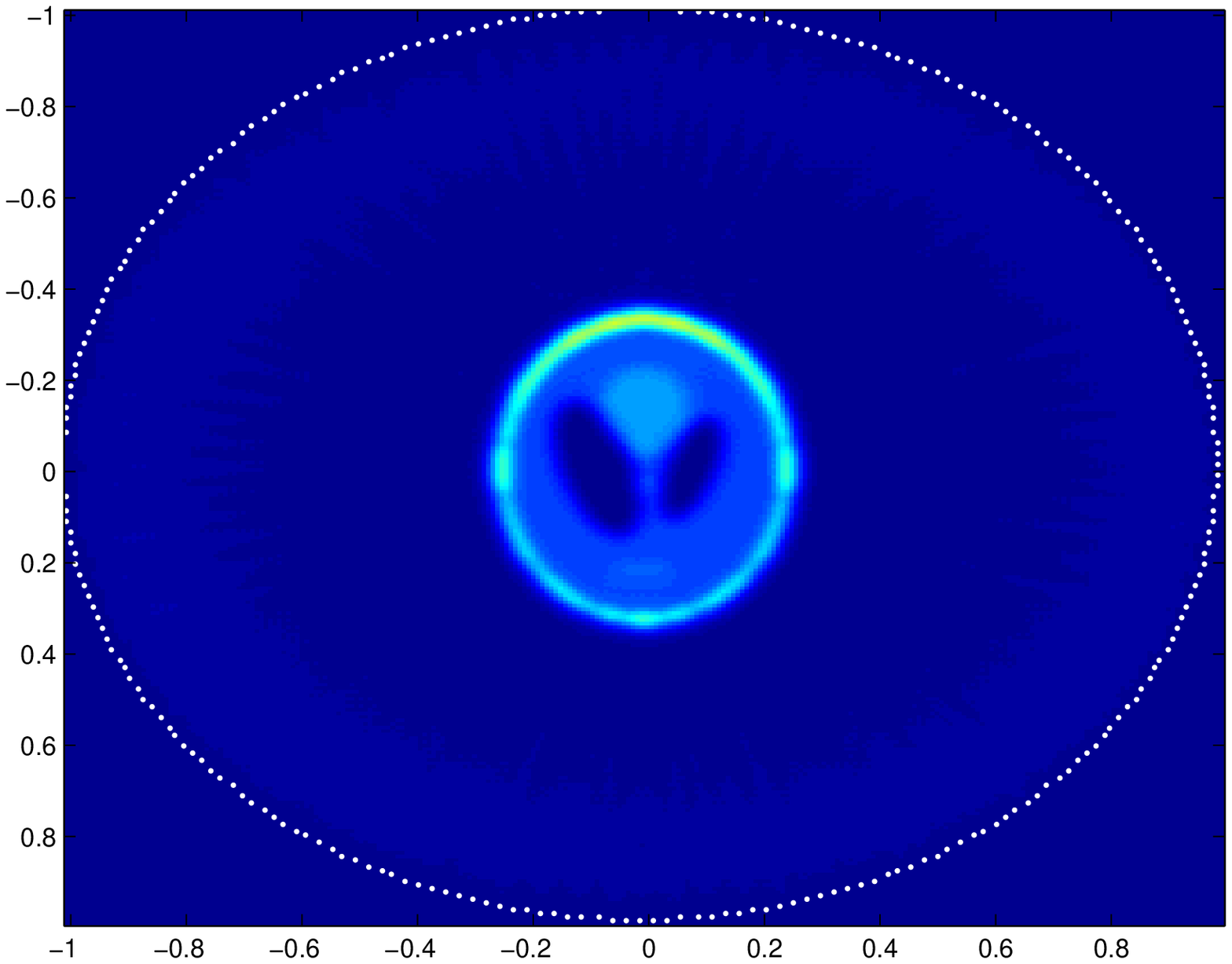}
\qquad
\includegraphics[height = 0.20\textheight]{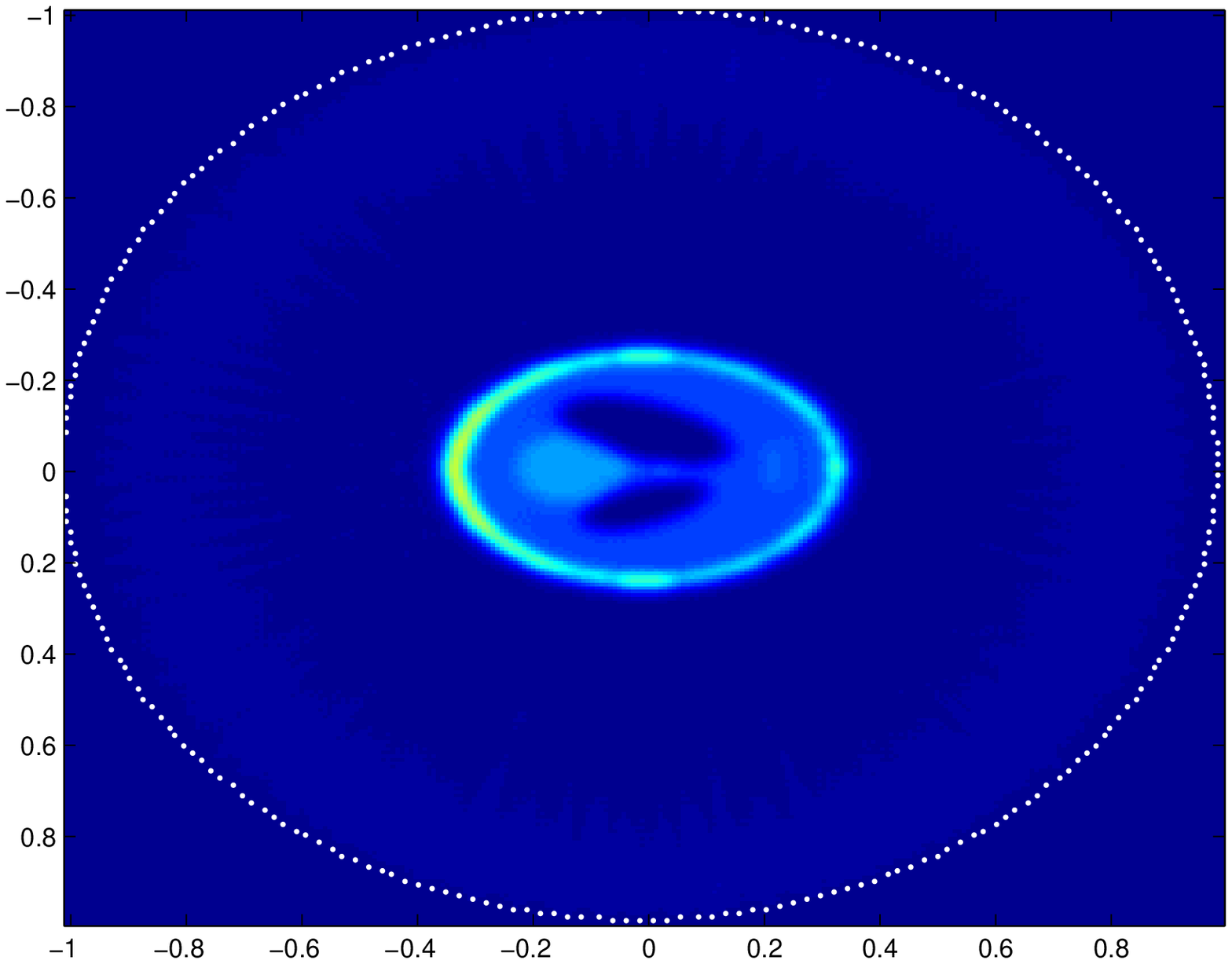} 
\\
\includegraphics[height = 0.20\textheight]{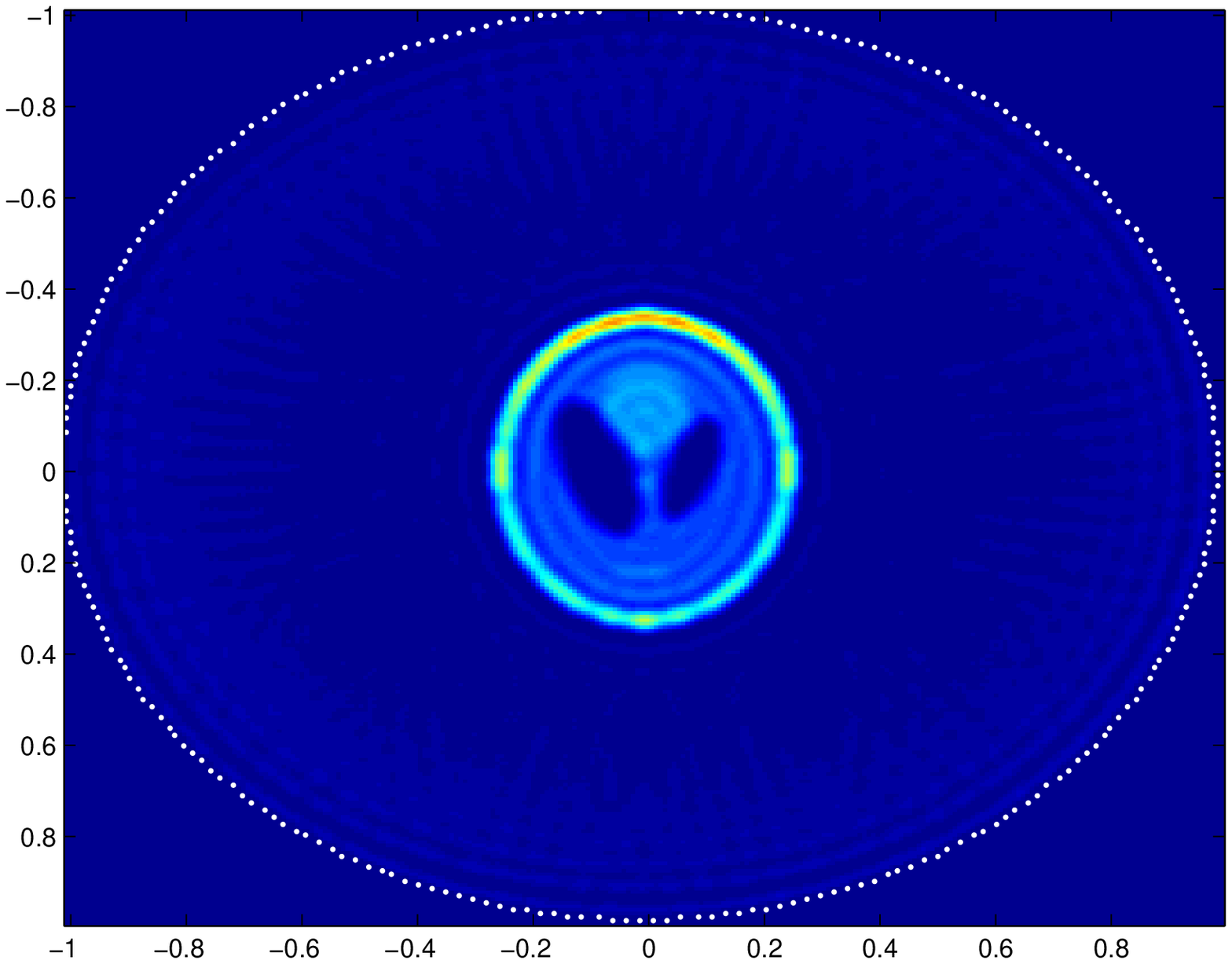}
\qquad
\includegraphics[height = 0.20\textheight]{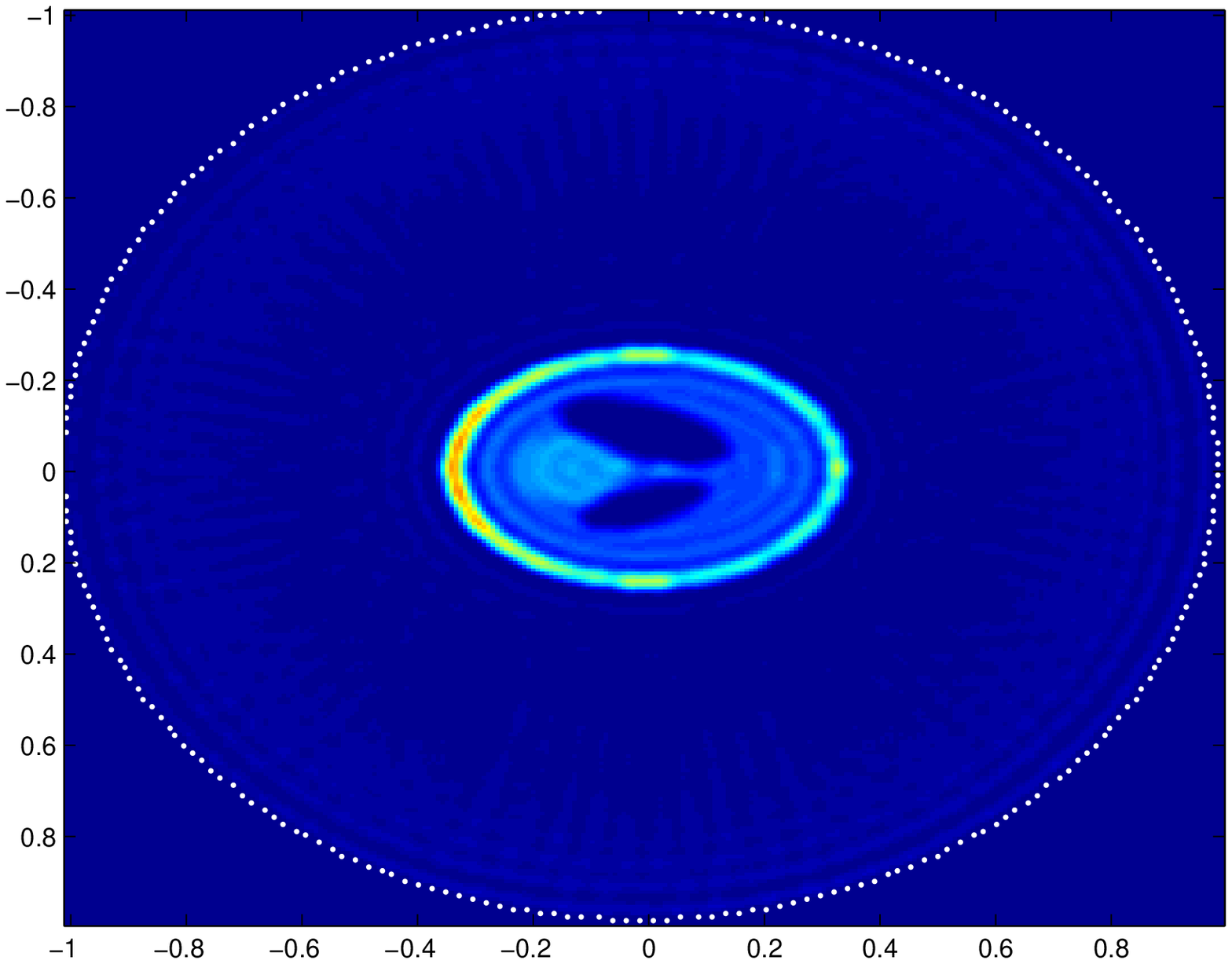} 
\\
\includegraphics[height = 0.20\textheight]{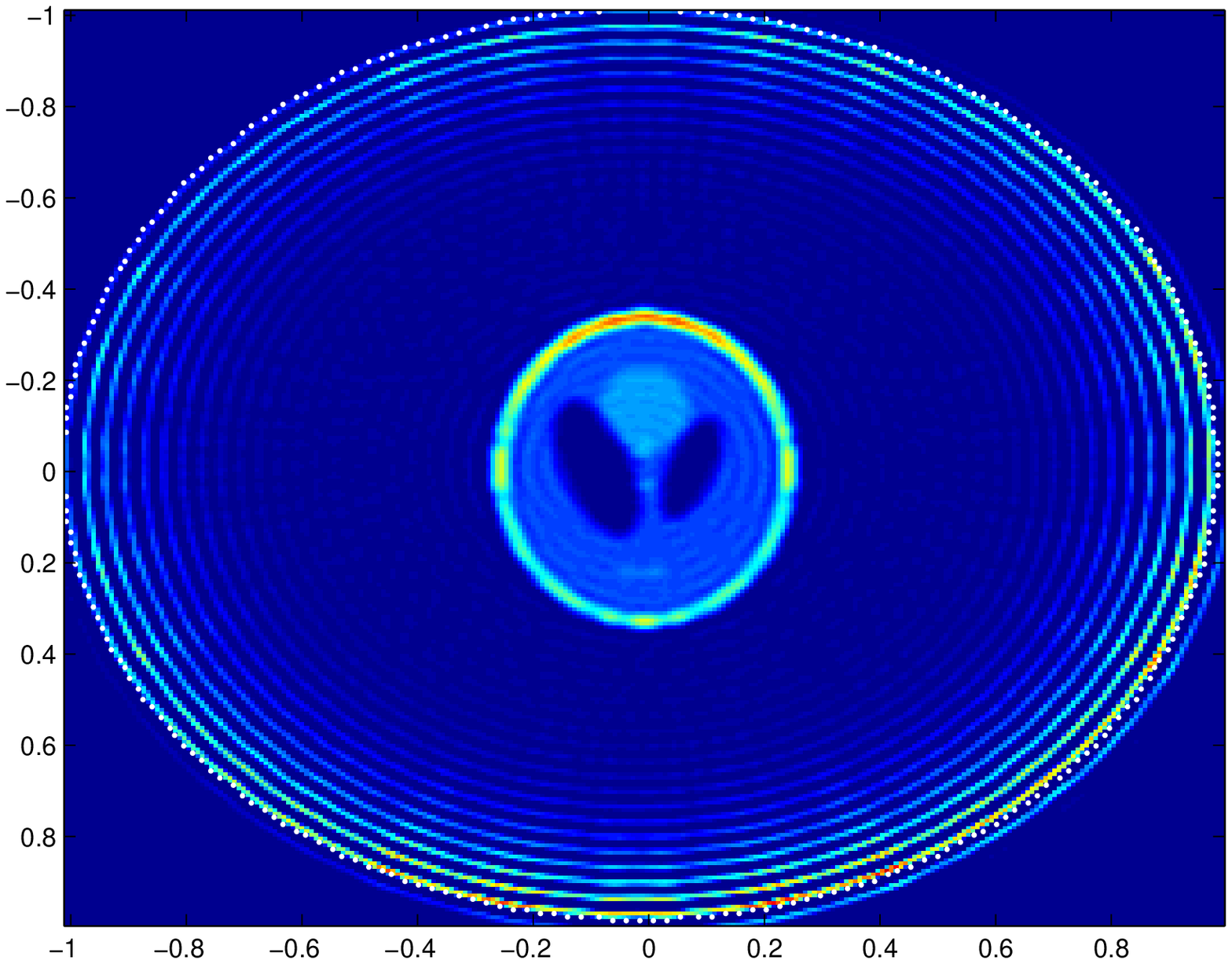}
\qquad
\includegraphics[height = 0.20\textheight]{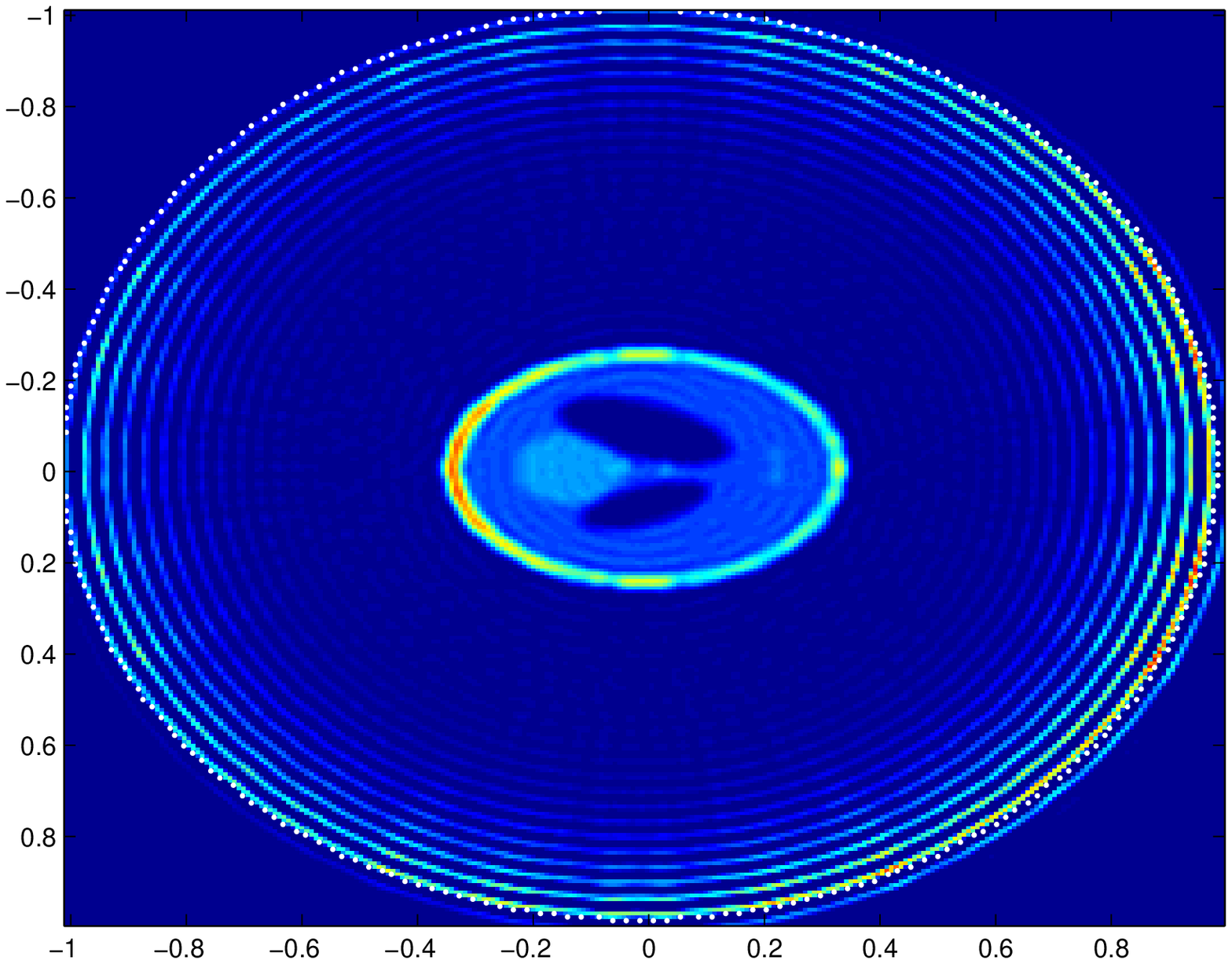}
\\~\\
\caption{Reconstruction of the spatial support of electric current source density $\J=(J_1,J_2,0)$ in an attenuating medium  with $a=4\times 10^{-4}$ using time reversal. Left: $J_1$, Right: $J_2$. Top to bottom: Initial source, reconstruction using $\Itr$ (without attenuation correction), reconstruction using $\Ier$ with $\rho=15$ and $\rho=25$, respectively.}
\label{f3}
\end{center}
\end{figure}

\section{Conclusion}\label{s:conc}

In this investigation, an electromagnetic inverse source problem is tackled using transient boundary measurements of the electric field. A time reversal algorithm  is established for an extended source localization in non-attenuating media. Motivated by this, two more algorithms based on time reversal framework are proposed in order to deal with the problem associated to lossy media wherein the time reversal invariance breaks down. First an adjoint wave back-propagation technique is proposed and justified using asymptotic expansions versus Debye's attenuation parameter of some ill-conditioned attenuation maps. It is proved that this approach yields the current source density up to leading order of attenuation parameter. Then, a second approach is outlined where the lossy data are pre-processed to yield the ideal (non-attenuating) measurements and subsequently the classical time reversal algorithm is invoked to retrieve the current source density. The numerical illustrations clearly indicate the pertinence of the proposed imaging functions. For brevity, the medium is assumed to be non-conducting, but, the results extend to the case otherwise. Time reversal algorithms for inverse scattering problems in lossy dielectric media can be developed similarly and will be the subject of a fourth coming work. 

\bibliographystyle{plain}

\end{document}